\definecolor{myblue}{rgb}{0.2,0.2,0.8}
\definecolor{myblack}{rgb}{0,0,0}
\definecolor{myurl}{rgb}{0.1,0.1,0.4}
\edef\restoreparindent{\parindent=\the\parindent\relax}
\newtheorem{definition}{Definition}
\newcommand{\defref}[1]{Definition~\ref{#1}}
\newtheorem{lemma}{Lemma}
\newcommand{\lemref}[1]{Lemma~\ref{#1}}
\newtheorem{prop}{Proposition}
\newcommand{\propref}[1]{Proposition~\ref{#1}}
\DeclareMathOperator\supp{supp}
\newcommand{\<}{\langle}
\renewcommand{\>}{\rangle}
\newcommand{\lo}{{\mathcal{L}}}
\newcommand{\s}{{\mathcal{S}}}
\newcommand{\ee}{{\mathcal{E}}}
\newcommand{\xx}{{\mathcal{X}}}
\newcommand{\yy}{{\mathcal{Y}}}
\renewcommand{\aa}{{\mathcal{A}}}
\newcommand{\qq}{{\mathcal{Q}}}
\newcommand{\mm}{{\mathcal{M}}}
\newcommand{\ii}{{\mathcal{I}}}
\newcommand{\jj}{{\mathcal{J}}}
\newcommand{\re}{\mathds{R}}
\newcommand{\h}{{\mathcal{H}}}
\newcommand{\kk}{{\mathcal{K}}}
\newcommand{\hs}{{\mathcal{H}\sub{\s}}}
\newcommand{\ha}{{\mathcal{H}\sub{\aa}}}
\newcommand{\hsys}{{H\sub{\s}}}
\newcommand{\happ}{{H\sub{\aa}}}
\newcommand{\E}{\mathsf{E}}
\newcommand{\F}{\mathsf{F}}
\newcommand{\Z}{\mathsf{Z}}
\newcommand{\G}{\mathsf{G}}
\renewcommand{\P}{\mathsf{P}}
\renewcommand{\nat}{\mathds{N}}
\newcommand{\one}{\mathds{1}}
\newcommand{\onesys}{\mathds{1}\sub{\s}}
\newcommand{\oneapp}{\mathds{1}\sub{\aa}}
\newcommand{\zero}{\mathds{O}}
\newcommand{\imag}{\mathfrak{i}}
\newcommand{\tr}{\mathrm{tr}}
\newcommand{\tra}{\mathrm{tr}\sub{\aa}}
\newcommand{\sub}[1]{_{\!\mathsmaller{\, #1}}}
\newcommand{\eq}[1]{Eq.~\eqref{#1}}
\newcommand{\fig}[1]{Fig.~\ref{#1}}
\newcommand{\app}[1]{Appendix~(\ref{#1})}
\newcommand{\ket}[1]{|{#1}\rangle}
\newcommand{\avg}[1]{\langle {#1} \rangle} 
\newcommand{\lnn}[1]{\ln\left( {#1}\right)}
\begin{document}

\title{Thermodynamically free quantum measurements}
\author{M. Hamed Mohammady}
\email{mohammad.mohammady@ulb.be}
\affiliation{QuIC, \'{E}cole Polytechnique de Bruxelles, CP 165/59, Universit\'{e} Libre de Bruxelles, 1050 Brussels, Belgium}


\begin{abstract}
Thermal channels---the free processes allowed in the resource theory of quantum thermodynamics---are generalised to thermal instruments, which we interpret as implementing  thermodynamically free quantum measurements;  a Maxwellian demon using such measurements never violates the second law of thermodynamics. Further properties of thermal instruments are investigated and, in particular, it is shown that they only measure observables commuting with the Hamiltonian,  and they thermalise the measured system when performing a complete measurement, the latter of which indicates a thermodynamically induced information-disturbance trade-off. The demarcation of  measurements that are not thermodynamically free paves the way for a resource-theoretic quantification of their thermodynamic  cost.
\end{abstract}

\maketitle

\section{Introduction}
Quantifying the thermodynamic cost of quantum processes is one of the central goals of quantum thermodynamics \cite{ Huber2015, Bedingham2016a, Faist2017, Barato2017, DeChiara2018, Pearson2021, Chiribella2021a}. However, no consensus has yet been reached as to how such quantification   should be achieved, and in particular   a universally agreed upon definition for work remains elusive \cite{Allahverdyan2005,Talkner2007,Allahverdyan2014, Perarnau-Llobet2016a, Deffner2016a, Hovhannisyan2021, Beyer2021}. The resource theory of quantum thermodynamics \cite{Brandao2013, Goold2015, Lostaglio2018a} attempts to circumvent this issue  by addressing what can and cannot be done when we restrict ourselves to processes that are ``thermodynamically free''. Such processes are referred to as (trace preserving) \emph{thermal operations}, or thermal channels \cite{Horodecki2013, Navascues2015a, Brandao2015a, Perry2015, Lostaglio2016, Mazurek2018}, which are realised by an energy conserving unitary interaction with an auxiliary system initially prepared in thermal equilibrium with an external  bath. Energy conservation of the interaction implies no net-consumption of energy. Indeed, in such a case the distribution of work given by the celebrated two-point energy measurement protocol vanishes for all input states, and hence the average work  will  agree with the unmeasured work---the difference in expected energy evaluated before and after the unitary evolution---as both quantities  vanish. On the other hand,  thermality of the auxiliary system implies that it is freely available and hence its preparation will not accrue any costs.  Within this framework,  instead of directly quantifying the work cost for a given process   we may instead ask what resource states---which are athermal and hence not thermodynamically free---are required to augment thermal channels so that the desired process may, at least approximately, be achieved.    

In this paper, we  generalise the notion of a thermal channel to a \emph{thermal instrument}---a collection of thermal operations that sum to a thermal channel---which measures a \emph{thermal observable}. As with ordinary thermal channels, a thermal instrument is implemented by a  measuring apparatus with a probe that is initially prepared in thermal equilibrium with an external bath. However, \emph{both} the ``premeasurement'' interaction between system and probe, as well as the ensuing ``pointer objectification'' that completes the measurement process, are energy conserving.   We interpret thermal instruments as implementing a thermodynamically free measurement. The justification for this interpretation follows analogous lines of reasoning as that for ordinary thermal channels cited above and, \emph{a fortiori}, by the fact that a Maxwellian demon utilising such measurements never violates the second law of thermodynamics.

We investigate other properties of thermal instruments, showing that thermodynamic constraints vastly limit the types of measurements that can be performed. For example, it is shown that thermal instruments  only measure observables that commute with the Hamiltonian,  and   thermalise the measured system when performing a  \emph{complete}  measurement. The demarcation of  measurements that are not thermodynamically free is a first step towards a resource-theoretic  quantification of  the thermodynamic cost of measurements; while maintaining energy conservation for the measurement process, the cost of a measurement can be quantified by the  athermality that must be initially present in the probe \cite{Ahmadi2013b, Miyadera2015e, Miyadera2020a, Mohammady2021a}. This approach will also allow for such cost-quantification to depend only on the properties of the desired measurement, and not on the initial state of the measured system (and hence the final state of the probe), such as is the case for  approaches that rely on Landauer erasure of the probe \cite{Sagawa2009b, Jacobs2012a}, or where the  measurement process is embedded in a  thermodynamic cycle \cite{Lipka-Bartosik2018, Mohammady2019c}.

\section{Quantum Measurement}
 An observable of a quantum system with  Hilbert space $\hs$ is represented by  a  positive operator valued measure (POVM) \cite{PaulBuschMarianGrabowski1995, Busch1996, Heinosaari2011, Busch2016a}. For simplicity, we shall only consider finite-dimensional Hilbert spaces and  discrete  observables $\E:= \{\E_x: x \in \xx\}$, with the finite set of outcomes $\xx := \{x_1, \dots ,x_N\}$, where $\zero \leqslant \E_x \leqslant \onesys$ are the \emph{effects} of $\E$ which satisfy   $\sum_{x\in \xx} \E_x = \onesys$.    The probability of registering outcome $x$ when measuring observable $\E$ in the state $\rho$ is given by the Born rule as $p^\E_\rho(x) := \tr[\E_x \rho]$. We shall employ the short-hand notation $[\E, A]=\zero$ to indicate that the operator  $A$  commutes with all the effects of $\E$, and $[\E, \F]=\zero$ to indicate that all the effects of observables $\E$ and $\F$ mutually commute.   An observable $\E$ is sharp if  $\E_x \E_y = \delta_{x,y} \E_x$, i.e.,  if  $\E_x$ are mutually orthogonal projection operators. An observable that is not sharp will be called unsharp.

Every observable $\E$ is compatible with infinitely many instruments \cite{Davies1970}, which describe how the measured system is transformed. An instrument   acting in $\hs$ is a collection of operations (completely positive trace non-increasing linear maps) $\ii:= \{\ii_x : x\in \xx\}$ such that $\ii_\xx(\cdot) := \sum_{x\in \xx} \ii_x(\cdot)$ is a channel (a trace preserving operation). An instrument $\ii$ is identified with a unique observable  $\E$ via the relation  $\tr[\ii_x(\rho)] = \tr[\E_x \rho]$ for all outcomes $x$ and states $\rho$, and we shall refer to such $\ii$ as an $\E$-compatible instrument, or an $\E$-instrument for short, and to $\ii_\xx$ as the corresponding $\E$-channel.

Every instrument may be implemented by infinitely many measurement schemes \cite{Ozawa1984}. A measurement scheme  is given by the tuple $\mm:= (\ha, \xi, \ee , \Z)$, where $\ha$ is the Hilbert space for (the probe  of)  the apparatus $\aa$ and  $\xi$ is a fixed state on  $\ha$, $\ee $ is a channel acting in  $\hs\otimes \ha$ which serves to correlate   the two systems, and $\Z := \{\Z_x : x\in \xx\}$ is a   pointer observable acting in  $\ha$. For all outcomes $x$,  the operations of the instrument $\ii$ implemented by $\mm$ can be written as
\begin{align}\label{eq:instrument-dilation}
    \ii_x (\cdot) = \tra[(\one\sub{\s}\otimes \Z_x) \ee (\cdot \otimes \xi) ],
\end{align}
where  $\tra[\cdot]$   is the partial trace  over $\ha$. The channel implemented by $\mm$ is thus $\ii_\xx(\cdot)  = \tra[\ee (\cdot \otimes \xi) ]$.

\subsection{Thermodynamically free measurement schemes, thermal instruments, and thermal observables}
A measurement scheme $\mm$ can be understood in two ways. If we wish to measure some desired observable, with some specific choice of  instrument, then we may specify the elements of $\mm$ so as to achieve this. However, one may also consider the elements of $\mm$ as given, and then ask what  observable and instrument such a scheme implements. If we impose any constraints on $\mm$, it naturally follows that the class of implementable observables and instruments will be limited. We now define  thermodynamically free measurement schemes, and subsequently determine the class of observables and instruments such schemes may or may not implement.

\begin{definition}\label{defn:thermal-measurement}
A thermodynamically free measurement scheme for a system $\hs$ with Hamiltonian $\hsys$ is described by the tuple $\mm_\beta:= (\ha, \happ, \beta, \ee , \Z)$, where:

\begin{enumerate}[(i)]
    \item The probe  with  Hamiltonian $\happ$  is prepared in a Gibbs state $\xi_\beta :=e^{-\beta \happ}/\tr[e^{-\beta \happ}]$ with  inverse temperature $\beta >0$.
    
    \item The interaction channel $\ee $ is  bistochastic, i.e., $\ee $ preserves both the trace and the identity. 
    
    \item The interaction channel $\ee $ conserves the total additive Hamiltonian $H = \hsys\otimes \oneapp + \onesys \otimes \happ$.
    
    \item The pointer observable satisfies $[\Z,\happ]=\zero$.
\end{enumerate}

\end{definition}

Condition (i) is justified by the fact that a Gibbs state describes a system when it is in thermal equilibrium with a large thermal bath. Provided that the bath is given, for example if it is the ambient environment that an experimental situation happens to find itself in, then a Gibbs state is thermodynamically free as it  does not require any effort to prepare---one need only place $\aa$ in thermal contact with the bath, and wait a sufficiently long  time for it to reach thermal equilibrium. Indeed, in such a case there is no need to ``erase'' the information stored in the probe between successive measurements, and hence no Landauer erasure cost will ensue \cite{Reeb2013a, Miller2020}; the probe may simply be discarded and replaced with another Gibbs state.  Condition (ii) ensures that the entropy of the compound system cannot decrease due to the measurement interaction, so that the second law is satisfied and no hidden ``entropy sinks'' are being used \cite{Alberti1982}. Note that unitary channels are a special subclass of bistochastic channels.  On the other hand, conditions (iii) and (iv) are both justified by energy conservation, so that the compound of system to be measured and the probe of the apparatus are energetically isolated during the \emph{entire} measurement process. A channel $\Phi$   conserves the Hamiltonian $H$ if  all moments of energy are invariant under its action, i.e.,  $\tr[H^k \Phi(\varrho)] = \tr[H^k \varrho]$ for all states $\varrho$ and $k \in \nat$.  While a channel may preserve the first moment of energy while not the higher moments, since $\ee$ is bistochastic then preservation  of the first moment guarantees that all higher moments will be preserved. See \app{app:bistochastic-conservation} for a proof.  Now note that in order for the measurement process to be complete, the pointer observable must be objectified, or ``measured''. We model the objectification process in the language of instruments, and so introduce some $\Z$-compatible instrument $\jj$ acting in $\ha$ that measures the pointer observable. While we do not consider the exact form such an instrument takes---by \eq{eq:instrument-dilation} we see that it is only the pointer observable, and not the instrument that measures it, which uniquely determines the instrument acting in the system---we do demand that the  $\Z$-channel $\jj_\xx$ conserves energy. As shown in  Ref. \cite{Mohammady2021a}, this constraint demands that $\Z$ commutes with $\happ$. We refer to such commutation  as the the Yanase condition  \cite{Yanase1961, Ozawa2002, Loveridge2011} which was first introduced in the context of  the Wigner-Araki-Yanase theorem  \cite{E.Wigner1952,Busch2010,Araki1960}. 

\begin{figure}[!htb]
\begin{center}
\includegraphics[width=0.56\textwidth]{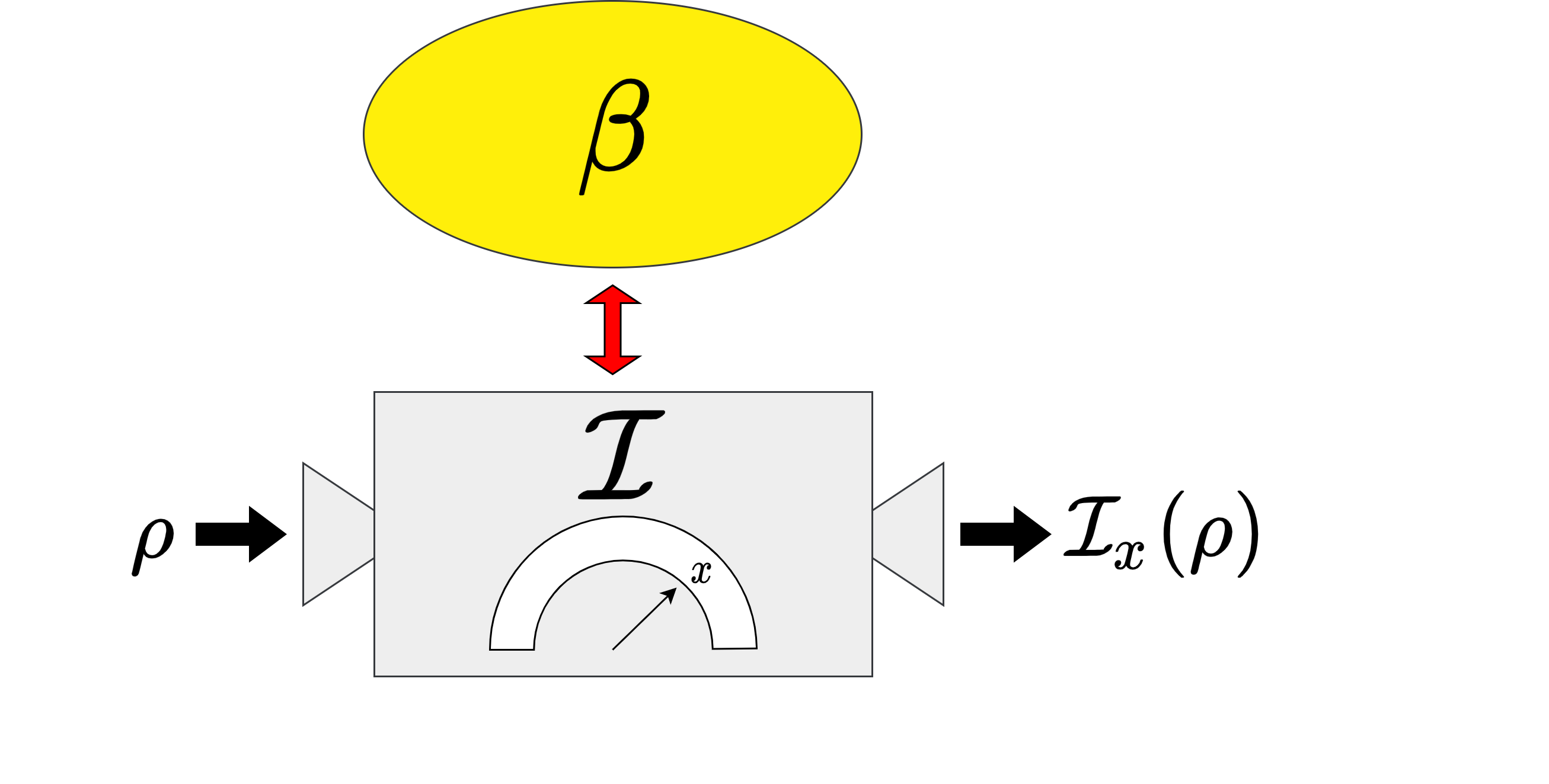}
\vspace*{-0.2cm}
\caption{An instrument $\ii$ measures the system in state $\rho$ and, conditional on observing outcome $x$, prepares the (non-normalised) state  $\ii_x(\rho)$. $\ii$ is thermal when it is implemented by a bistochastic energy conserving interaction between the  system and a probe   initially prepared in  thermal equilibrium with a bath of inverse temperature $\beta$, followed by read-out of a pointer observable commuting with the probe's Hamiltonian. } \label{fig:Instrument}
\vspace*{-0.5cm}
\end{center}
\end{figure}

\bigskip

By \eq{eq:instrument-dilation} and \defref{defn:thermal-measurement}, the operations of an instrument implemented by a thermodynamically free measurement scheme  will read $\ii_x( \cdot) = \tra[(\one\sub{\s}\otimes \Z_x) \ee (\cdot \otimes \xi_\beta) ]$. We may now define thermal instruments and thermal observables as follows:
\begin{definition}\label{defn:thermal-instrument}
Consider a system $\hs$ with Hamiltonian $\hsys$. An instrument $\ii$ acting in $\hs$   is called thermal   if there exists a thermodynamically free measurement scheme $\mm_\beta:= (\ha, \happ, \beta, \ee , \Z)$, as per \defref{defn:thermal-measurement}, such that for all $x$ and $\rho$ it holds that
\begin{align*}
\ii_x( \rho) = \tra[(\one\sub{\s}\otimes \Z_x) \ee (\rho \otimes \xi_\beta) ].
\end{align*}
 Similarly, an observable $\E$ acting in $\hs$   is called thermal   if there exists a thermodynamically free measurement scheme $\mm_\beta$ such that for all $x$ and $\rho$ it holds that
\begin{align*}
\tr[\E_x \rho]= \tr[(\one\sub{\s}\otimes \Z_x) \ee (\rho \otimes \xi_\beta) ].
\end{align*}
\end{definition}
See \fig{fig:Instrument} for a schematic representation of a thermal instrument. Note that every thermal instrument is compatible with a thermal observable, and that every thermal observable admits (possibly many) thermal instruments. But in contradistinction to the case of thermal instruments, thermality of an observable $\E$ does not impose any requirements on how  the state of the system changes upon measurement; all that is required is that the measurement statistics of the pointer observable after the interaction recovers the measurement statistics of $\E$ in the system prior to the interaction. Indeed, as we shall see below, the constraints on the thermality of observables are much weaker than the constraints on the thermality of instruments.

\subsection{Gibbs-preservation and time-translation covariance of thermal instruments, and time-translation invariance of thermal observables} 
Two salient features of thermal channels are Gibbs-preservation, and time-translation covariance. Such properties are also enjoyed by the individual operations of a thermal instrument: if $\ii$ is an $\E$-compatible thermal instrument then for all outcomes $x$ it holds that $\ii_x(\tau_\beta) = \tr[\E_x \tau_\beta]\tau_\beta$, where $\tau_\beta := e^{-\beta \hsys}/\tr[e^{-\beta \hsys}]$ is the Gibbs state of the system for some $\beta>0$, and that  $\ii_x( e^{-\imag t \hsys} \rho  \, e^{\imag t \hsys}) =  e^{-\imag t \hsys} \ii_x(\rho) e^{\imag t \hsys}$ for all $\rho$ and $t$. See \app{app:thermal-instrument-Gibbs-covariant} for the proofs.   Covariance  implies that for any asymmetry monotone $ \mathscr{I}(\hsys, \cdot)$   that obeys \emph{selective monotonicity} \cite{Zhang2017,Takagi2018}---such as the family of Wigner-Yanase-Dyson skew informations  \cite{Wigner1963, Lieb1973}, or the   quantum Fisher information \cite{PETZ2011}---and for all thermal instruments $\ii$ and states $\rho$, it holds that 
 \begin{align*}
     \mathscr{I}(\hsys, \rho) \geqslant \sum_{x\in \xx} \mathscr{I}(\hsys, \ii_x(\rho)) \geqslant \mathscr{I}(\hsys, \ii_\xx(\rho)).
 \end{align*}
 In other words, a thermal instrument will always decrease the asymmetry of a state with respect to the Hamiltonian ``on average''.  Note that even if we abandon the Yanase condition,  item (iv) in \defref{defn:thermal-measurement}, a thermal channel $\ii_\xx$ is always  covariant , in which case the relation $\mathscr{I}(\hsys, \rho) \geqslant  \mathscr{I}(\hsys, \ii_\xx(\rho))$ will continue to hold \cite{Marvian2014}. However, in such a case    covariance for the individual operations of $\ii$ will be broken, and so it may be the case that for some $\rho$ we have $\mathscr{I}(\hsys, \rho) < \sum_{x\in \xx} \mathscr{I}(\hsys, \ii_x(\rho))$.

As shown in \app{app:invariance-proof}, an observable $\E$ is thermal if and only if $\E$ is time-translation invariant, i.e., $e^{\imag t \hsys} \E_x e^{-\imag t \hsys} = \E_x $ for all $x$ and $t$, which is equivalent to $[\E, \hsys] = \zero$. The necessity of invariance follows directly from the covariance of thermal instruments. On the other hand, the sufficiency follows from the fact that we may always choose a ``trivial'' thermodynamically free measurement scheme, which uses a probe that is identical to the measured system, and a unitary swap channel which is evidently both bistochastic and energy conserving. Since any pointer observable commuting with the Hamiltonian is permitted, then by choosing $\Z = \E$, we see that all observables commuting with the Hamiltonian are thermal. However, a trivial thermodynamically free measurement scheme implements a trivial thermal instrument, i.e., the operations of the instrument will read $\ii_x(\rho) = \tr[\E_x \rho] \tau_\beta$ for all $x$ and $\rho$. That is, independent of the input and the observed outcome, the system will be thermalised; recall that the measurability question is independent of the question of how the system is transformed upon measurement. Therefore, while all  invariant observables are thermal, it does not follow that all covariant instruments are thermal. Indeed, it can be shown that some covariant instruments cannot be thermal.  As a simple example, consider the case where $\E$ is a rank-1 sharp observable, where the effects are the rank-1 projections $\E_x = |\psi_x\>\<\psi_x|$, with $\{\ket{\psi_x}\}$ an eigenbasis of $\hsys$. It is trivial to show that the von Neumann-L\"uders instrument  $\ii^L_x(\cdot) = \<\psi_x| \cdot|\psi_x\> |\psi_x\>\<\psi_x|$ is covariant. As shown in Ref. \cite{Guryanova2018}, such an instrument cannot be implemented by a rank non-decreasing interaction channel with a probe that is prepared in a full-rank state. Note that such a restriction is independent of energy conservation, and follows only from the third law of thermodynamics. But since a thermodynamically free measurement scheme employs a bistochastic interaction channel $\ee$, which is rank non-decreasing, and a thermal probe $\xi_\beta$ for the apparatus, which is full-rank, then such an instrument does not admit a thermodynamically free measurement scheme, and is hence not thermal.  In fact, as we shall show below, a thermodynamically free measurement of a rank-1 observable such as that discussed above will necessarily thermalise the measured system.

\section{Extractable work and the second law }
Given a single thermal bath of inverse temperature $\beta$, the extractable work of a system $\s$ with Hamiltonian $\hsys$, initially prepared in state $\rho$,  is defined as 
\begin{align}\label{eq:extractable-work}
W_\rho:= \beta^{-1} S(\rho \| \tau_\beta),
\end{align}
where $S(\cdot \| \cdot)$ is the quantum relative entropy. The  extractable work  is identified with the non-equilibrium free energy of $\rho$ relative to the Gibbs state $\tau_\beta$, and has an operational meaning as the   maximum amount of work that can be extracted by an isothermal process, achieved in the quasistatic limit as  $\rho$ is transformed to  $\tau_\beta$ \cite{Esposito2011}. If the system is initially prepared in thermal equilibrium with the bath then no work can be extracted, since $S(\rho \| \tau_\beta) = 0$ whenever $\rho = \tau_\beta$. This is  the second law of thermodynamics in effect. But what if we are able to measure the system? We define the average extractable work of a system initialised in state $\rho$, given  measurement by an $\E$-compatible instrument $\ii$ followed by feedback, as 
\begin{align}\label{eq:average-extractable-work}
    \avg{W_\rho^\ii} := \beta^{-1} \sum_{x\in \xx} \tr[\E_x \rho] S(\rho_x \| \tau_\beta),
\end{align}
where for any $\rho$ and $x$ such that $\tr[\E_x \rho]>0$, we define  $\rho_x := \ii_x(\rho)/\tr[\E_x \rho]$ as the  conditional post-measurement state of the system.    Here, $\avg{W_\rho^\ii}$ is the maximum (average) amount of work that can be extracted if, conditional on observing outcome $x$, we choose a specific isothermal process   so as to transform the post-measurement state $\rho_x$ to the Gibbs state $\tau_\beta$. The average extractable work therefore quantifies the merit of an information  heat engine that utilises measurement and feedback with a single thermal bath, such as the Szilard engine and its descendants \cite{Szilard1976,Maruyama2009, Kim2011a,Mohammady2017,Aydin2020}.

Now let us consider the  measurements that are usually considered in the literature. Let $\E$ be a sharp  observable with effects $\E_x = |\psi_x\>\<\psi_x|$, where $\{\ket{\psi_x}\}$ is an eigenbasis of $\hsys$, and assume that $\E$ is measured by the von Neumann-L\"uders instrument $\ii^L_x(\cdot) = \<\psi_x| \cdot|\psi_x\> |\psi_x\>\<\psi_x|$.  By a simple calculation, we can see that when the input state is thermal, i.e., $\rho = \tau_\beta$, then it holds that $\rho_x = |\psi_x\>\<\psi_x|$ for all $x$, and \eq{eq:average-extractable-work} reduces to $ \beta^{-1}\mathscr{H}$, 
where $\mathscr{H}$ is the Shannon entropy of the  probability distribution $\{\<\psi_x| \tau_\beta |\psi_x\> \}$; using measurement and feedback, we have completely converted heat into work. Previous attempts to ``save'' the second law in such a case rely on Landauer erasure of the probe, which given certain assumptions about the measurement process can be shown to  have a minimum work cost of $\beta^{-1}\mathscr{H}$. But as shown by the following proposition, maintaining the second law will not need such arguments when the measurement itself is  thermodynamically free:

\begin{prop}\label{prop:thermal-instrument-second-law}
Let $\ii$ be a thermal $\E$-instrument acting in $\hs$, implemented at inverse temperature $\beta$. Then for all states $\rho$, it holds that
\begin{align*}
 W_\rho \geqslant \beta^{-1} D_\rho^\ii + \avg{W_\rho^\ii},    
\end{align*}
with    $W_\rho$ and $\avg{W_\rho^\ii}$  defined by   \eq{eq:extractable-work} and \eq{eq:average-extractable-work}, respectively, and where $D_\rho^\ii := \sum_x \tr[\E_x \rho] \lnn{\tr[\E_x \rho]/ \tr[\E_x \tau_\beta]} \geqslant 0$ is the classical relative entropy between the probabilities arising from a measurement in $\rho$, and the probabilities arising from a measurement in $\tau_\beta$. 
\end{prop}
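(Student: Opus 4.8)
The plan is to reduce the entire statement to a single application of the data-processing inequality (monotonicity of the quantum relative entropy under completely positive trace-preserving maps), with the Gibbs-preservation of the individual operations of a thermal instrument doing the essential work. Writing $p_x := \tr[\E_x \rho]$ and $q_x := \tr[\E_x \tau_\beta]$, and recalling $\rho_x = \ii_x(\rho)/p_x$, the target inequality is equivalent, after multiplying through by $\beta$, to
\begin{align*}
S(\rho \| \tau_\beta) \geqslant \sum_{x\in\xx} p_x \lnn{p_x/q_x} + \sum_{x\in\xx} p_x \, S(\rho_x \| \tau_\beta).
\end{align*}

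First I would introduce an auxiliary classical register with orthonormal basis $\{\ket{x}\}_{x\in\xx}$ and define the instrument channel
\begin{align*}
\Lambda(\sigma) := \sum_{x\in\xx} \ket{x}\bra{x} \otimes \ii_x(\sigma).
\end{align*}
Since $\ii_\xx = \sum_x \ii_x$ is trace preserving and each $\ii_x$ is completely positive, $\Lambda$ is a CPTP map. Evaluating it on the two arguments of interest gives $\Lambda(\rho) = \sum_x p_x \, \ket{x}\bra{x} \otimes \rho_x$, while the Gibbs-preservation of the individual operations of a thermal instrument, $\ii_x(\tau_\beta) = q_x \, \tau_\beta$ (established in \app{app:thermal-instrument-Gibbs-covariant}), yields the product-like form $\Lambda(\tau_\beta) = \sum_x q_x \, \ket{x}\bra{x} \otimes \tau_\beta$.

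The key computation is then the relative entropy between these two classical-quantum states. Because both are block-diagonal in the register basis, $\ln\Lambda(\rho)$ and $\ln\Lambda(\tau_\beta)$ split blockwise into $\ln p_x + \ln\rho_x$ and $\ln q_x + \ln\tau_\beta$ respectively, and a direct evaluation of $\tr[\Lambda(\rho)(\ln\Lambda(\rho) - \ln\Lambda(\tau_\beta))]$ separates into a classical divergence and an averaged conditional quantum divergence,
\begin{align*}
S(\Lambda(\rho)\|\Lambda(\tau_\beta)) = \sum_{x\in\xx} p_x \lnn{p_x/q_x} + \sum_{x\in\xx} p_x \, S(\rho_x\|\tau_\beta),
\end{align*}
which is exactly the right-hand side above. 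Applying data processing, $S(\rho\|\tau_\beta) \geqslant S(\Lambda(\rho)\|\Lambda(\tau_\beta))$, closes the chain, and dividing by $\beta$ recovers the stated bound; the positivity of $D_\rho^\ii$ is the ordinary nonnegativity of classical relative entropy.

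The only genuinely load-bearing ingredient is the Gibbs-preservation property $\ii_x(\tau_\beta) = q_x\tau_\beta$: it is what makes every block of $\Lambda(\tau_\beta)$ proportional to the \emph{same} state $\tau_\beta$, so that the relative entropy splits cleanly into the classical term $D_\rho^\ii$ plus the average of the conditional divergences $S(\rho_x\|\tau_\beta)$. For a generic instrument the blocks $\ii_x(\tau_\beta)/q_x$ would be distinct states, the decomposition would acquire cross terms, and no such second-law inequality would follow—so the thermality of the instrument is precisely what is being used. Everything else is the standard classical-quantum relative-entropy identity together with monotonicity, and I expect no obstacle beyond this bookkeeping.
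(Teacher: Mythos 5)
Your proof is correct and follows essentially the same route as the paper's: both construct the classical--quantum dilation $\sum_x \ii_x(\cdot)\otimes|x\>\<x|$, invoke Gibbs-preservation of the individual operations to make every block of the image of $\tau_\beta$ proportional to $\tau_\beta$, and then apply the data-processing inequality together with the block-diagonal (direct-sum) decomposition of the relative entropy. The only cosmetic difference is that you compute the blockwise splitting by hand where the paper cites it as the ``direct sum'' property, and the paper additionally records the convention $\rho_x:=\zero$ when $\tr[\E_x\rho]=0$ to keep all terms well defined.
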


See \app{app:second-law} for a proof.  The above proposition is a consequence of the Gibbs-preserving property of thermal instruments, and states that when both the measuring apparatus and the measured system are in contact with a single thermal bath, then the average extractable work given  thermodynamically free measurements and feedback can never exceed the extractable work without measurement; indeed, if the system is initially prepared in thermal equilibrium with the bath, then  $\avg{W_{\tau_\beta}^\ii} = W_{\tau_\beta} =0$ and so no work can be extracted at all.  Couched in more poetic terms, a thermodynamically impotent Maxwellian demon needs no exorcism, for while it may gain information about a system in thermal equilibrium, it is unable to convert such information into useful work \cite{Norton-Demon-2}.

Now let us address the question of the heat that is absorbed by the system from the thermal environment, via the thermal probe, as a result of the measurement interaction. The channel $\Lambda(\cdot) := \tr\sub{\s}[\ee(\cdot \otimes \xi_\beta)]$, where $\tr\sub{\s}[\cdot]$ denotes the partial trace over $\hs$,  describes how the probe is transformed after it has interacted with the system in state $\rho$. $\Lambda$ is referred to as the conjugate channel to $\ii_\xx$. For any input state $\rho$ of the system, the heat absorbed by the system from the probe  is defined as the decrease in the expected energy of the probe, i.e.,
\begin{align*}
    \qq_\rho^\ii := \tr[\happ (\xi_\beta - \Lambda(\rho))].
\end{align*}
By energy conservation, it is trivial to show that $\qq_\rho^\ii = \tr[\hsys (\ii_\xx(\rho) - \rho)]$. That is, the heat absorbed by the system is identical to the increase in expected energy of the system. This is consistent with the first law of thermodynamics, and the fact that we are assuming that no work is done as a result of the measurement interaction.  Now note that we may  write $\beta^{-1}(S(\rho \| \tau_\beta) - S(\sigma \| \tau_\beta)) = \tr[\hsys (\rho - \sigma)] + \beta^{-1}(S(\sigma) - S(\rho))$ for all states $\rho, \sigma$, where $S(\cdot)$ is the von Neumann entropy \cite{Reeb2013a}. By \eq{eq:extractable-work}, \eq{eq:average-extractable-work}, and \propref{prop:thermal-instrument-second-law}, we may therefore write  the following:
\begin{align}\label{eq:heat-Groenewold}
  \avg{W_\rho^\ii} -  W_\rho &=  \qq_\rho^\ii +  \beta^{-1} I(\ii, \rho) \leqslant - \beta^{-1} D_\rho^\ii \leqslant 0,
\end{align}
where 
\begin{align*}
I(\ii, \rho) :=     S(\rho) - \sum_x \tr[\E_x \rho] S(\rho_x)
\end{align*}
is the Groenewold information gain as a system in state $\rho$ is measured by an instrument $\ii$  \cite{Groenewold1971, Lindblad1972,Ozawa1986}. The Groenewold information gain is guaranteed to be non-negative for all $\rho$ if and only if the instrument $\ii$ is ``quasi-complete'', where $\ii$ is called quasi-complete if for every pure state $\rho$, the conditional post-measurement states $\rho_x$ are also pure \cite{Ozawa1986}. For example, the L\"uders instrument  $\ii^L_x(\cdot) = \sqrt{\E_x} \cdot \sqrt{\E_x}$ compatible with an arbitrary observable $\E$ is quasi-complete. \eq{eq:heat-Groenewold}  demonstrates that a thermal instrument  cannot be quasi-complete. To see this, let us choose  $\rho = |\psi_0\>\<\psi_0|$ as a ground-state of $\hsys$ (which may have a  degenerate spectrum) so that $\qq_\rho^\ii\geqslant 0$. Now note that unless the observable measured by $\ii$ is trivial, i.e., if for all outcomes $x$ either $\E_x \propto \onesys$ or $\E_x = \zero$, then  $\tr[\E_x \rho] \ne \tr[\E_x \tau_\beta]$ for at least some $x$, which implies that $D_\rho^\ii >0$. In such a case,   the final inequality in \eq{eq:heat-Groenewold} becomes strict, i.e., $\qq_\rho^\ii +  \beta^{-1} I(\ii, \rho) < 0$, and so   $I(\ii, \rho)$ must be strictly  negative;  while $\rho$ is pure, then at least some $\rho_x$ are mixed.  In general,   the Groenewold information gain for a thermal instrument will be negative, and the more negative it is, the smaller the extractable work  from measurement and feedback becomes in comparison to the extractable work without measurement.

Let us now also highlight a simple consequence of \eq{eq:heat-Groenewold}, which is that for any thermal instrument implemented at inverse temperature $\beta$, the heat absorbed from the probe will obey the bound
\begin{align*}
 \qq_\rho^\ii \leqslant    -\beta^{-1} I(\ii, \rho).
\end{align*}
The same inequality was shown to hold in Theorem 1 of Ref. \cite{Danageozian2022} in a different setting. The authors in Ref. \cite{Danageozian2022} also assumed that the measuring apparatus is prepared in a Gibbs state, but they did not impose energy conservation on the measurement interaction, and instead considered the class of all unitary interaction channels.  Moreover, the  inequality was shown to hold only in the case where $\Lambda(\rho) = \xi_\beta$, i.e., where the reduced state of the probe does not change as a result of the measurement interaction. Such an approximation was argued to be justified if the probe is ``macroscopic''. But in the case of thermal instruments, the above inequality holds irrespective of how the state of the probe changes, and also when the energy conserving interaction channel is not unitary but is bistochastic. 

\section{Energy compatibility }
Recall that an observable $\E$ is thermal if and only if $[\E, \hsys]=\zero$.  Such commutativity  admits an elegant interpretation in terms of  \emph{compatibility}  \cite{Heinosaari2015,Guhne2021}. Two observables $\E:= \{\E_x : x\in \xx\}$ and $\F:= \{\F_y: y \in \yy\}$ are compatible, or jointly measurable, if they admit a joint observable $\G:=\{\G_{x,y}: (x,y) \in \xx\times \yy\}$ so that $\E_x = \sum_y \G_{x,y}$ and $\F_y = \sum_x \G_{x,y}$. If $\E$ and $\F$ do not admit a joint observable, then they are \emph{incompatible}. Now let the Hamiltonian have the spectral decomposition $\hsys = \sum_m \omega_m \P_m$, with $\omega_m$ the distinct energy eigenvalues and $\P_m$ the corresponding spectral projections. The sharp observable $\P:= \{\P_m\}$ is the spectral measure of $\hsys$, and we will refer to measurement of $\P$ and of $\hsys$ interchangeably. Given that $[\E, \hsys]=\zero$ implies  $[\E, \P]=\zero$, and that  commutativity is a sufficient condition for compatibility, then $\E$ and $\P$ are jointly measureable. Indeed, since $\P$ is sharp then the effects of the joint observable are uniquely given as $\G_{x,m} = \E_x \P_m$. In other words, for any thermal observable $\E$ we may  construct a single measurement device that jointly gives both the statistics of $\E$ and the statistics of the Hamiltonian.

 Of course, while all thermal observables are  jointly measureable with the Hamiltonian, this does not generally imply that  two thermal observables are themselves compatible; while any pair of thermal observables $\E$ and $\F$ must both commute with $\hsys$, it may be possible for them to not commute with each other. However, there is one limiting situation where compatibility is guaranteed: when the Hamiltonian has a non-degenerate energy spectrum. In such a case $\P_m$ are rank-1 projections, and so  the effects of any thermal observable are simultaneously diagonalisable as $ \E_x = \sum_m p(x|m) \P_m$, where $\{p(x|m)\}$ is a family of non-negative numbers that satisfy $\sum_x p(x|m) = 1$ for all $m$ \cite{Pellonpaa2014a}. It is clear that  any pair of thermal observables will commute, and are therefore compatible. In such a case, we may infer that measurement of incompatible observables will always have some thermodynamic cost---even if $\E$ is a thermal observable, $\F$ will be incompatible with $\E$ only if it is non-thermal. We note that the ability to measure incompatible observables is a crucial ingredient in many quantum phenomena such as violation of Bell inequalities \cite{Fine1982} and quantum steering \cite{Cavalcanti2017}, and incompatible observables have been shown to outperform compatible ones for quantum state discrimination \cite{Carmeli2019c}. Indeed, measurement of  incompatible observables has also been suggested as a method of  efficiently fueling quantum heat engines  \cite{Manikandan2021}.

\section{Conditional state preparation and complete  measurements}
An $\E$-compatible instrument $\ii$ is said to be \emph{nuclear} if its operations satisfy 
\begin{align*}
    \ii_x(\rho) = \tr[\E_x \rho] \sigma_x
\end{align*}
for all $\rho$ and $x$, where $\{\sigma_x\}$ is a family of states that depend only on the measurement outcome, and not the input state $\rho$. Nuclear instruments have utility as conditional state preparation devices, since for any input state, conditional on observing outcome $x$ we know that the system has been prepared in the state $\sigma_x$. An example of a nuclear instrument is the well-known von Neumann-L\"uders measurement, which ``collapses'' the measured system into the eigenstates of the measured observable. The following proposition, which is a consequence of the Gibbs-preserving property of thermal instruments,  implies that   a  non-trivial state preparation device always has some thermodynamic cost:   
\begin{prop}\label{prop:nuclear-thermal-instrument}
Let $\ii$ be a  thermal $\E$-instrument acting in  $\hs$. If $\ii$ is nuclear, then $\ii$ is a trivial, thermalising instrument, with operations satisfying
\begin{align*}
\ii_x(\rho) = \tr[\E_x \rho] \tau_\beta    
\end{align*}
for all $\rho$ and $x$, where $\tau_\beta$ is the Gibbs state of the system for some $\beta >0$. 
\end{prop}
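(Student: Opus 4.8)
The plan is to combine the nuclear hypothesis with the Gibbs-preservation property of thermal instruments established earlier in the excerpt. Recall that any thermal $\E$-instrument $\ii$ satisfies $\ii_x(\tau_\beta) = \tr[\E_x \tau_\beta]\,\tau_\beta$ for every outcome $x$. If $\ii$ is in addition nuclear, so that $\ii_x(\rho) = \tr[\E_x\rho]\,\sigma_x$ for all $\rho$ and $x$, then evaluating the nuclear form at the Gibbs state gives $\ii_x(\tau_\beta) = \tr[\E_x\tau_\beta]\,\sigma_x$. Equating the two expressions for $\ii_x(\tau_\beta)$ yields $\tr[\E_x\tau_\beta]\,\sigma_x = \tr[\E_x\tau_\beta]\,\tau_\beta$ for every $x$, and the proposition then amounts to extracting $\sigma_x = \tau_\beta$ from this identity.

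First I would treat the outcomes for which $\tr[\E_x\tau_\beta]>0$. For these the scalar prefactor is nonzero, so dividing through immediately gives $\sigma_x = \tau_\beta$, whence $\ii_x(\rho) = \tr[\E_x\rho]\,\tau_\beta$ for all $\rho$, exactly as claimed.

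The only remaining case is an outcome $x$ with $\tr[\E_x\tau_\beta]=0$, and this is where a little care is needed, since one can no longer divide and $\sigma_x$ is a priori unconstrained. Here I would invoke that the Gibbs state $\tau_\beta = e^{-\beta\hsys}/\tr[e^{-\beta\hsys}]$ is strictly positive (full-rank). Since $\E_x \geqslant \zero$, writing $\tr[\E_x\tau_\beta] = \tr[\tau_\beta^{1/2}\E_x\tau_\beta^{1/2}]$ and using that a positive operator of zero trace vanishes, together with invertibility of $\tau_\beta^{1/2}$, forces $\E_x = \zero$. Consequently $\tr[\E_x\rho]=0$ for every state $\rho$, and the nuclear operation $\ii_x(\rho) = \tr[\E_x\rho]\,\sigma_x$ vanishes identically irrespective of $\sigma_x$; thus $\ii_x(\rho) = \zero = \tr[\E_x\rho]\,\tau_\beta$ holds trivially, and the desired form is recovered in this degenerate case as well.

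Combining both cases establishes the claimed form for every outcome. The substance of the argument is carried entirely by the Gibbs-preservation property, evaluated at the single input $\rho=\tau_\beta$; the only genuine (and minor) subtlety is the degenerate outcomes, which full-rankness of $\tau_\beta$ dispatches at once, so I expect no further obstacle.
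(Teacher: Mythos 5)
Your proof is correct and follows essentially the same route as the paper's: evaluate the nuclear form at the Gibbs input and compare with the Gibbs-preservation identity $\ii_x(\tau_\beta) = \tr[\E_x \tau_\beta]\tau_\beta$ to conclude $\sigma_x = \tau_\beta$. Your additional handling of the degenerate outcomes with $\tr[\E_x\tau_\beta]=0$ (forcing $\E_x=\zero$ by full-rankness of $\tau_\beta$) is a small rigour improvement over the paper's argument, which divides by the prefactor without comment.
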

See \app{app:nuclear-thermal} for a proof. Let us now highlight an important consequence of the above result for the class of rank-1 observables. An observable $\E$ is called rank-1 if all the effects are of the form $\E_x = \lambda_x P_x$, where $\lambda_x \in (0,1]$ and $P_x$ is a rank-1 projection operator. Rank-1 observables are \emph{complete} measurements,  since any observable  can be maximally  ``refined'' into a rank-1 observable \cite{Holland1990, Buscemi2005, Pellonpaa2014}; if the effects of some observable $\F$ can be diagonalised as $\F_y = \sum_i \lambda_i^{(y)} P_i^{(y)}$, then  a rank-1 observable $\E$ with effects $\E_{x} \equiv \E_{y,i} = \lambda_i^{(y)} P_i^{(y)}$ is a maximal refinement of $\F$. As shown in Corollary 1 of Ref. \cite{Heinosaari2010} (also see Theorem 2 of Ref. \cite{Pellonpaa2013a}), all  instruments compatible with a rank-1 observable are nuclear. In conjunction with   \propref{prop:nuclear-thermal-instrument}, it follows that a  thermodynamically free  measurement of a rank-1 observable   necessarily thermalises the measured system. 

As an interesting remark, let us consider again the situation where the system's Hamiltonian has a non-degenerate spectrum, so that the effects of any thermal observable $\E$ may be written as $ \E_x = \sum_m p(x|m) \P_m$, where  $\P_m$ are the rank-1 spectral projections of the Hamiltonian. It clearly follows that measuring any thermal observable other than the Hamiltonian is superfluous; one may  reconstruct the statistics of  all thermal observables by post-processing the measurement statistics of the Hamiltonian with the  numbers $p(x|m)$.    But a thermodynamically free measurement of the Hamiltonian---which is a  rank-1 observable---necessarily  thermalises the measured system. We see that there is a   thermodynamically induced information-disturbance trade-off, where by obtaining all the information that is available without expending any thermodynamic resources, we must thermalise the system so as to destroy all the information contained therein. This is analogous to the case where, in the absence of any thermodynamic constraints, measurement of an informationally complete observable completely destroys all the information in  the measured system  \cite{Hamamura-Miyadera}.

\section{ Conclusions }
By taking inspiration from the resource-theoretic approach to quantum thermodynamics,  thermodynamically free measurements have been defined as a thermal instrument, where each step of the measurement process has zero associated costs. Indeed, such measurements never lead to an advantage in work extraction from a single thermal bath, and so the Maxwell demon paradox is resolved from the outset without need for any post-hoc exorcisms.   Having provided a preliminary demarcation of non-free measurements,  it is now possible to provide a resource-theoretic quantification for the fundamental cost of  measurements. For example, by maintaining all the elements of a thermodynamically free measurement scheme except for the thermality of the probe---maintaining the bistochasticity and energy conservation of the measurement interaction, and commutation of the pointer observable with the Hamiltonian---we may continue to  avoid the conceptual difficulties that arise when trying to directly quantify the work cost of channels. In such a case, we may obtain bounds for the necessary  athermality in the initial probe preparation so as to approximately achieve the desired measurement. Alternatively, we may continue to use thermal probes for the apparatus, but augment the thermodynamically free measurement scheme by introducing extra auxiliary systems such as catalysts, so that the cost quantification would be determined by the athermality required of such  systems \cite{Lipka-Bartosik2020a, Wilming2021, Lipka-Bartosik2022}.  Insofar as the cost of measuring non-thermal observables is concerned,  i.e., observables not commuting with the Hamiltonian, a partial answer to this question has already been given in \cite{Mohammady2021a}; a large energy coherence in the probe preparation (or the catalysts), quantified by the quantum Fisher information \cite{Marvian2021}, is  necessary  to approximately measure  non-thermal observables. However, the cost of implementing non-thermal instruments in general is still largely an open problem, and this task is left for future work.

\begin{acknowledgments}
 The author wishes to thank Nicolas Cerf, Ravi Kunjwal, Harry J. D. Miller,  Ognyan Oreshkov, Patryk Lipka-Bartosik, and  M\'ario Ziman for insightful discussions. This project has received funding from the European Union’s Horizon 2020 research and innovation programme  under the Marie Skłodowska-Curie grant agreement No. 801505.
\end{acknowledgments}

\medskip 

\noindent{\bf \large Appendix}
\appendix


Before presenting the detailed proofs for the claims made in the main text, let us first
introduce some notation and basic definitions. We denote by $\lo(\h)$ the algebra of linear 
operators on a finite-dimensional complex Hilbert space $\h$, with $\zero$ and $\one$ the null 
and identity operators of $\lo(\h)$, respectively. A ``Schr\"odinger picture''  operation is 
defined as a completely positive, trace non-increasing linear map  $\Phi : \lo(\h) \to 
\lo(\kk)$, where $\h$ is the input space and $\kk$ is a potentially different output space. 
When both input and output spaces are the same, i.e.,  $\h = \kk$, we say that $\Phi$ acts in 
$\h$. The associated  ``Heisenberg picture'' dual operation is  a completely positive  linear 
map $\Phi^* : \lo(\kk) \to \lo(\h)$, defined by the trace duality $\tr[A \Phi(B)] = 
\tr[\Phi^*(A) B]$ for all $A\in \lo(\kk), B \in \lo(\h)$. $\Phi^*$ is sub-unital, i.e., 
$\Phi^*(\one\sub{\kk}) \leqslant \one\sub{\h}$, and is unital when the equality holds, which is the case exactly
when $\Phi$ is a channel, i.e., when $\Phi$ preserves the trace. A channel  
$\Phi$ acting in $\h$ is  bistochastic  if both $\Phi$ and $\Phi^*$ are trace preserving and unital. Bistochastic 
channels never decrease the von Neumann entropy of a system, since by the data processing inequality we may write 
for any state $\rho$ the following:
\begin{align*}
    S(\Phi(\rho)) - S(\rho) = S(\rho\| \one) - S(\Phi(\rho) \| \Phi(\one)) \geqslant 0,
\end{align*}
where $S(\rho):= -\tr[\rho \lnn{\rho}]$ is the von Neumann entropy of $\rho$,  and $S(\rho \| \sigma) := \tr[\rho(\lnn{\rho} - \lnn{\sigma})]$ is the entropy of $\rho$ relative to $\sigma$ whenever $\supp(\rho) \subseteq \supp(\sigma)$ and is defined as $S(\rho \| \sigma):= \infty$ otherwise,  and it holds that $S(\rho) = - S(\rho \| \one)$ \cite{Buscemi2016}.

\section{Bistochastic channels and energy conservation}\label{app:bistochastic-conservation}

A channel $\Phi$ acting in $\h$ conserves energy if  all moments of energy are preserved under its action, i.e.,  $\tr[H^k \Phi(\varrho)] = \tr[H^k \varrho]$ for all $k \in \nat$ and states $\varrho$ on $\h$. This condition can be equivalently  stated in the Heisenberg picture as $\Phi^*(H^k) = H^k$ for all $k\in \nat$. While a general channel may conserve the first moment while not conserving the higher moments, we shall now show that in the special case of bistochastic channels acting in a finite-dimensional Hilbert space, a channel conserving the first moment is guaranteed to conserve all higher moments.

\begin{lemma}\label{lemma:bistochastic-fixed-point}
Let $\Phi$ be a bistochastic channel acting in a finite-dimensional Hilbert space $\h$. Assume that $\Phi^*(H) = H$  for some self-adjoint operator with spectral decomposition $H= \sum_n \lambda_n P_n$. The following hold:
\begin{enumerate}[(i)]
    \item $\Phi^*(H^k) = H^k$ for all $k \in \nat$.
    \item $\Phi(e^{-\imag t H}  \varrho \, e^{\imag t H}) = e^{-\imag t H} \Phi(\varrho) e^{\imag t H}$ for all $\varrho \in \lo(\h)$ and $t \in \re$.
    \item $\Phi(\varrho) = \varrho$ for all states with spectral decomposition $\varrho= \sum_n p_n P_n$.
\end{enumerate}
\end{lemma}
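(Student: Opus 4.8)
The plan is to deduce all three parts from a single fact: energy conservation together with bistochasticity forces $H$ to lie in the \emph{multiplicative domain} of the dual map $\Phi^*$. First I would record two consequences of bistochasticity that get used repeatedly: $\Phi^*$ is completely positive and unital (the latter because $\Phi$ preserves the trace), and $\Phi^*$ is itself trace preserving (because $\Phi$ is unital, $\tr[\Phi^*(A)]=\tr[A\,\Phi(\one)]=\tr[A]$). In particular $\Phi^*$ obeys the Kadison--Schwarz inequality $\Phi^*(A^*A)\geqslant \Phi^*(A)^*\Phi^*(A)$ for all $A$.

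For part (i), I would apply this inequality to the self-adjoint fixed point $H=\Phi^*(H)$, giving $\Phi^*(H^2)\geqslant \Phi^*(H)^2 = H^2$, so $\Phi^*(H^2)-H^2\geqslant \zero$. Taking the trace and using that $\Phi^*$ is trace preserving yields $\tr[\Phi^*(H^2)-H^2]=0$; a positive semidefinite operator of vanishing trace is the null operator, hence $\Phi^*(H^2)=H^2=\Phi^*(H)^2$. This is precisely the statement that $H$ lies in the multiplicative domain of $\Phi^*$. The multiplicative domain is a $*$-subalgebra on which $\Phi^*$ acts as a homomorphism, so by induction $\Phi^*(H^k)=\Phi^*(H)^k=H^k$ for all $k\in\nat$. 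Since $\h$ is finite dimensional, every spectral projection $P_n$ and every unitary $e^{\imag t H}$ is a polynomial in $H$, so linearity gives the bonus identities $\Phi^*(P_n)=P_n$ and $\Phi^*(e^{\imag t H})=e^{\imag t H}$, which feed parts (ii) and (iii).

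For part (ii), I would exploit the bimodule property of the multiplicative domain: since $e^{\pm\imag t H}$ also lies in the multiplicative domain, one has $\Phi^*(e^{\imag t H}A\,e^{-\imag t H})=e^{\imag t H}\Phi^*(A)\,e^{-\imag t H}$ for every $A$. Dualising through the trace pairing, $\tr[A\,\Phi(e^{-\imag tH}\varrho\,e^{\imag tH})]=\tr[\Phi^*(e^{\imag tH}Ae^{-\imag tH})\,\varrho]$, and comparing with $\tr[A\,e^{-\imag tH}\Phi(\varrho)e^{\imag tH}]$ shows the two expressions agree for all $A$ and $\varrho$, which is exactly covariance. For part (iii), I would use $\Phi^*(P_m)=P_m$ to compute $\tr[P_m\Phi(P_n)]=\tr[\Phi^*(P_m)P_n]=\tr[P_mP_n]=\delta_{mn}\tr[P_n]$. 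For $m\neq n$ this trace of a product of two positive operators vanishes, forcing $P_m\Phi(P_n)=\zero$; hence $\Phi(P_n)=P_n\Phi(P_n)P_n$ is supported inside the $n$-th eigenspace. Combined with unitality, which gives $\tr[\Phi(P_n)]=\tr[P_n]$ and $\Phi(P_n)\leqslant\Phi(\one)=\one$, a positive operator confined to a $\tr[P_n]$-dimensional block, bounded above by the identity there and of full trace, must equal the identity on that block, i.e. $\Phi(P_n)=P_n$. Linearity then yields $\Phi(\varrho)=\sum_n p_n\Phi(P_n)=\sum_n p_n P_n=\varrho$.

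The main obstacle, and the place where every hypothesis is genuinely used, is upgrading the Schwarz \emph{inequality} $\Phi^*(H^2)\geqslant H^2$ to an \emph{equality}. This is what bistochasticity buys: trace preservation of $\Phi^*$ turns the inequality into a trace identity, and positivity then collapses it. Without bistochasticity the first moment can be conserved while the higher moments are not, so this step is the quantitative heart of the lemma; once it is in place, parts (i)--(iii) are formal consequences of the multiplicative-domain formalism. A minor point to handle carefully in (iii) is justifying that a vanishing trace of the product of two positive semidefinite operators forces the product itself to vanish.
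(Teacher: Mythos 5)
Your proof is correct, but it takes a genuinely different route from the paper's. The paper fixes a Kraus representation $\Phi(\cdot)=\sum_i K_i\cdot K_i^\dagger$ and invokes Theorem 3.5 of Arias, Gheondea and Gudder, which for a bistochastic channel on a finite-dimensional space identifies the fixed-point sets of both $\Phi$ and $\Phi^*$ with the commutant of $\{K_i,K_i^\dagger\}$; once $[H,K_i]=[H,K_i^\dagger]=\zero$ is in hand, items (i)--(iii) follow by direct manipulation of the Kraus operators. You avoid Kraus operators entirely: Kadison--Schwarz gives $\Phi^*(H^2)\geqslant H^2$, trace preservation of $\Phi^*$ (equivalently, unitality of $\Phi$) upgrades this to an equality, and Choi's multiplicative-domain theorem then delivers the homomorphism and bimodule properties that drive the rest. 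The two arguments are close cousins --- your Schwarz-plus-trace step is essentially the proof of the algebra-closure half of the cited fixed-point theorem --- but yours is more self-contained, makes explicit exactly where bistochasticity is indispensable (without unitality of $\Phi$ the inequality cannot be collapsed, which is why first-moment conservation does not imply higher-moment conservation in general), and cleanly isolates the transfer from the Heisenberg-picture fixed points back to the Schr\"odinger picture in item (iii), a step the paper handles implicitly through the commutant. In a written-up version you should spell out two small facts you rely on: that in finite dimensions $P_n$ and $e^{\imag tH}$ are polynomials in $H$ and hence lie in the multiplicative domain, and that $\tr[AB]=0$ with $A,B\geqslant\zero$ forces $AB=\zero$ (via $\tr[A^{1/2}BA^{1/2}]=0$), which you already flag as needing care.
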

\begin{proof}
Let   $\{K_i\}$ be any  Kraus representation for $\Phi$, so that  $\Phi(\cdot) = \sum_i K_i \cdot K_i^\dagger$ and $\Phi^*(\cdot) = \sum_i K_i^\dagger \cdot K_i$. If $\Phi$ is bistochastic, then it holds that both $\Phi$ and $\Phi^*$ are trace preserving and unital, and so    $\sum_i K_i^\dagger K_i = \sum_i K_i K_i^\dagger = \one$. Now, since $\h$ is finite-dimensional,  it follows from Theorem 3.5 of Ref. \cite{Arias2002} that the fixed-point set of both $\Phi$ and $\Phi^*$ is the commutant of $\{K_i, K_i^\dagger\}$, i.e., $\Phi(A) = A$ if and only if $[A, K_i] = [A, K_i^\dagger] = \zero$ for all $i$, and similarly $\Phi^*(A) = A$ if and only if $[A, K_i] = [A, K_i^\dagger] = \zero$ for all $i$. 

Now let us prove (i). Assume that $\Phi^*(H) = H$. By the above, it trivially follows that $\Phi^*(H^k) = \sum_i K_i^\dagger H^k K_i = \sum_i K_i^\dagger K_i H^k  = H^k$ for all $k$. Now let us note that  $H$ commutes with $K_i, K_i^\dagger$ if and only if    all spectral projections $P_n$ commute with $K_i, K_i^\dagger$. Since $e^{-\imag t H} = \sum_n e^{-\imag t \lambda_n} P_n$, (ii) follows trivially from above. Similarly for (iii), commutation of $K_i, K_i^\dagger$ with $P_n$ and the fact that $\Phi$ is bistochastic implies that $\Phi(\varrho) = \sum_i K_i \varrho K_i^\dagger = \sum_i K_i K_i^\dagger \varrho =\varrho$. 
\end{proof}

\section{Gibbs-preservation and time-translation covariance of thermal instruments}\label{app:thermal-instrument-Gibbs-covariant}

It is well-known that  thermal channels preserves the Gibbs state, and are time-translation covariant. Here, we shall show that these proprieties are also enjoyed by all operations of a thermal instrument.

We first show the Gibbs-preserving property.
\begin{lemma}\label{lemma:thermal-instrument-Gibbs}
Let  $\ii$ be a thermal $\E$-instrument acting in a system $\hs$ with Hamiltonian $\hsys$. For some $\beta>0$ there exists a Gibbs state of the system $\tau_\beta$ such that for all $x$ the following holds:
\begin{align}\label{eq:thermal-instrument-Gibbs-input}
    \ii_x(\tau_\beta) = \tr[\E_x \tau_\beta] \tau_\beta.
\end{align}
\end{lemma}
\begin{proof}
Assume that $\ii$ is a thermal instrument, so that by \defref{defn:thermal-instrument}  it admits a thermodynamically free measurement scheme $\mm_\beta:= (\ha, \happ, \beta, \ee , \Z)$ where the probe is prepared in the Gibbs state $\xi_\beta = e^{-\beta \happ}/\tr[e^{-\beta \happ}]$.  For the Gibbs state of the system with the same temperature as the probe,  $\tau_\beta = e^{-\beta \hsys}/\tr[e^{-\beta \hsys}]$,  additivity of the total Hamiltonian $H = \hsys \otimes \oneapp + \onesys \otimes \happ$ implies that $\tau_\beta \otimes \xi_\beta = e^{-\beta H}/\tr[e^{-\beta H}]$ is the Gibbs state for the total system. Since $\ee$ is bistochastic and energy conserving, it follows from item (iii) of \lemref{lemma:bistochastic-fixed-point} that $\ee(\tau_\beta \otimes \xi_\beta) = \tau_\beta \otimes \xi_\beta$, and so by \eq{eq:instrument-dilation} it holds that 
\begin{align*}
    \ii_x(\tau_\beta) &= \tr\sub{\aa}[(\onesys \otimes \Z_x) \ee (\tau_\beta \otimes \xi_\beta )] = \tr\sub{\aa}[(\onesys \otimes \Z_x) \tau_\beta \otimes \xi_\beta] = \tr[\Z_x \xi_\beta] \tau_\beta 
\end{align*}
for all $x$. But since $\ii$ is compatible with $\E$, then it must hold that $\tr[\ii_x(\tau_\beta)] = \tr[\E_x \tau_\beta]$. This implies that   $\tr[\Z_x \xi_\beta] = \tr[\E_x \tau_\beta]$. This completes the proof.
\end{proof}

Now we shall show the covariance property. We note that this result is a consequence of  Theorem 8 in Ref. \cite{Keyl1999}. 

\begin{lemma}\label{lemma:thermal-instrument-covariance}
Let $\ii$ be a thermal instrument acting in a system $\hs$ with Hamiltonian $\hsys$.  It follows that $\ii$ is time-translation covariant.
\end{lemma}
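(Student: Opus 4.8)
The plan is to push the one-parameter group $e^{-\imag t \hsys}$ through the dilation formula $\ii_x(\rho) = \tra[(\onesys \otimes \Z_x)\,\ee(\rho \otimes \xi_\beta)]$, exploiting covariance of the interaction channel $\ee$ with respect to the \emph{total} Hamiltonian. The engine of the argument is item (ii) of \lemref{lemma:bistochastic-fixed-point}: since $\ee$ is bistochastic and conserves $H = \hsys \otimes \oneapp + \onesys \otimes \happ$ (so in particular $\ee^*(H) = H$), we have $\ee(e^{-\imag t H} X e^{\imag t H}) = e^{-\imag t H}\,\ee(X)\,e^{\imag t H}$ for all $X$ and $t$. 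The remaining steps are bookkeeping that turn this joint covariance into covariance of each individual operation $\ii_x$ acting on the system alone.

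First I would record the three elementary facts that make the probe degrees of freedom transparent. Because $H$ is additive, the joint evolution factorises as $e^{-\imag t H} = e^{-\imag t \hsys} \otimes e^{-\imag t \happ}$. Because the probe Gibbs state is a function of $\happ$, it is invariant under the probe evolution, $e^{-\imag t \happ}\,\xi_\beta\,e^{\imag t \happ} = \xi_\beta$. And because of the Yanase condition $[\Z, \happ] = \zero$ (item (iv) of \defref{defn:thermal-measurement}), each pointer effect commutes with the probe evolution, $[\Z_x, e^{-\imag t \happ}] = \zero$. Using the invariance of $\xi_\beta$ I can rewrite the rotated input as $(e^{-\imag t \hsys}\rho\,e^{\imag t \hsys}) \otimes \xi_\beta = e^{-\imag t H}(\rho \otimes \xi_\beta)\,e^{\imag t H}$, so that applying $\ee$ and invoking its covariance yields $e^{-\imag t H}\,\ee(\rho \otimes \xi_\beta)\,e^{\imag t H}$ inside the partial trace.

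The crux is then to evaluate $\tra[(\onesys \otimes \Z_x)\,e^{-\imag t H}\sigma\,e^{\imag t H}]$, with $\sigma := \ee(\rho \otimes \xi_\beta)$, and to show that the probe-side evolution drops out while the system-side evolution is extracted. Factorising $e^{-\imag t H}$ and using the Yanase condition to commute $\onesys \otimes \Z_x$ past $\onesys \otimes e^{-\imag t \happ}$, I would isolate a conjugation of the integrand by $\onesys \otimes e^{-\imag t \happ}$. Since the partial trace over $\aa$ is invariant under conjugation by any unitary acting only on $\ha$ (a one-line computation with an orthonormal basis of $\ha$, using that a unitary maps a basis to a basis), this factor disappears. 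What survives is a conjugation by $e^{-\imag t \hsys} \otimes \oneapp$, which commutes with the partial trace over $\aa$ and hence pulls out, giving $e^{-\imag t \hsys}\,\tra[(\onesys \otimes \Z_x)\sigma]\,e^{\imag t \hsys} = e^{-\imag t \hsys}\,\ii_x(\rho)\,e^{\imag t \hsys}$, which is exactly the claimed covariance.

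The only genuinely delicate point is this last manipulation of the partial trace: one must carefully track which operator acts on which tensor factor, and in particular use the Yanase condition at precisely the step where the pointer effect is slid past the probe evolution. Without it the probe-side unitary would fail to cancel and covariance of the individual operations would break down---consistently with the remark in the main text that dropping item (iv) of \defref{defn:thermal-measurement} leaves the channel $\ii_\xx$ covariant but can destroy covariance of the individual $\ii_x$. Everything else is routine, and no ingredient beyond \lemref{lemma:bistochastic-fixed-point} and the definition of a thermodynamically free measurement scheme is required.
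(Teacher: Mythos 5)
Your proposal is correct and follows essentially the same route as the paper's proof: rewrite the rotated input as a conjugation of $\rho\otimes\xi_\beta$ by $e^{-\imag t H}$ using invariance of the Gibbs probe state, invoke covariance of $\ee$ via item (ii) of \lemref{lemma:bistochastic-fixed-point}, and then eliminate the probe-side unitary under the partial trace using the Yanase condition while pulling the system-side unitary out. The only difference is the (immaterial) order in which the Yanase condition and the unitary-invariance of the partial trace are applied.
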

\begin{proof}
By \defref{defn:thermal-instrument}, if $\ii$ is a thermal instrument then it admits a thermodynamically free measurement scheme $\mm_\beta:= (\ha, \happ, \beta, \ee , \Z)$. By additivity of the total Hamiltonian $H = \hsys \otimes \oneapp + \onesys \otimes \happ$,  the unitary representation of the time-translation symmetry group $\re$ in $\hs \otimes \ha$ factorises as $ e^{-\imag t H} = e^{-\imag t \hsys} \otimes  e^{-\imag t \happ}$. Since $\ee$ is bistochastic and conserves the Hamiltonian, by item (ii) of \lemref{lemma:bistochastic-fixed-point} it holds that $\ee$ is time-translation covariant, i.e., $\ee (e^{-\imag t H} \varrho \, e^{\imag t H}) = e^{-\imag t H} \ee (\varrho) e^{\imag t H}$  holds for all $\varrho$ and $t$.  From \eq{eq:instrument-dilation}, we thus have for all $\rho$, $x$ and $t$ the following:
\begin{align*}
\ii_x(e^{-\imag t \hsys} \rho \,  e^{\imag t \hsys}) & = \tra[(\one\sub{\s} \otimes \Z_x ) \ee (e^{-\imag t \hsys}  \rho \, e^{\imag t \hsys} \otimes \xi_\beta) ] \nonumber \\
&  = \tra[(\one\sub{\s} \otimes \Z_x ) \ee ( e^{-\imag t H} (  \rho  \otimes \xi_\beta ) e^{\imag t H} )] \nonumber \\
&  = \tra[(\one\sub{\s} \otimes \Z_x )  e^{-\imag t H} \ee  (  \rho  \otimes \xi_\beta ) e^{\imag t H}] \nonumber \\
&  = e^{-\imag t \hsys} \tra[(\one\sub{\s} \otimes e^{\imag t \happ} \Z_x e^{-\imag t \happ} )   \ee  ( \rho  \otimes \xi_\beta )] e^{\imag t \hsys} \nonumber \\
&  = e^{-\imag t \hsys} \ii_x(  \rho ) e^{\imag t \hsys}.
\end{align*}
As such, $\ii$ is covariant. In the second line, we have used the fact that since $\xi_\beta$ is a Gibbs state then $[\xi_\beta, \happ]=\zero$, which implies  that $\xi_\beta = e^{- \imag t \happ} \xi_\beta \,  e^{\imag t \happ}$. In the third line, we have used time-translation covariance of $\ee$. In the fourth line, we have used the property of the partial trace. In the final line, we have used the Yanase condition $[\Z, \happ]=\zero$ which implies that $e^{\imag t \happ} \Z_x e^{-\imag t \happ} = \Z_x$, together with \eq{eq:instrument-dilation}. 
\end{proof}

\section{Time-translation  invariance of thermal observables}\label{app:invariance-proof}

Here we shall show that an observable is thermal if and only if it commutes with the Hamiltonian. Note, however, that not all time-translation covariant instruments are thermal. 

\begin{lemma}\label{lemma:thermal-observable-invariance}
Let  $\E$ be an observable acting in a system $\hs$ with Hamiltonian $\hsys$.  $\E$ is a thermal observable if and only if $\E$ commutes with $\hsys$. 
\end{lemma}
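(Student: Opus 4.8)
The plan is to prove the two implications separately. For necessity I would lean on the time-translation covariance already established for thermal instruments in \lemref{lemma:thermal-instrument-covariance}, while for sufficiency I would exhibit an explicit ``trivial'' thermodynamically free measurement scheme realising any Hamiltonian-commuting observable.

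For necessity, suppose $\E$ is thermal. By \defref{defn:thermal-instrument} it admits a thermodynamically free measurement scheme $\mm_\beta$, and the operations $\ii_x(\cdot) := \tra[(\one\sub{\s}\otimes\Z_x)\ee(\cdot\otimes\xi_\beta)]$ form a thermal $\E$-instrument. By \lemref{lemma:thermal-instrument-covariance} this instrument is covariant, $\ii_x(e^{-\imag t\hsys}\rho\,e^{\imag t\hsys}) = e^{-\imag t\hsys}\ii_x(\rho)\,e^{\imag t\hsys}$. I would then take the trace of both sides: compatibility $\tr[\ii_x(\cdot)]=\tr[\E_x\,\cdot]$ turns the left side into $\tr[\E_x\,e^{-\imag t\hsys}\rho\,e^{\imag t\hsys}]$, while cyclicity turns the right side into $\tr[\ii_x(\rho)]=\tr[\E_x\rho]$. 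Rewriting the left side as $\tr[(e^{\imag t\hsys}\E_x e^{-\imag t\hsys})\rho]$, I obtain $\tr[(e^{\imag t\hsys}\E_x e^{-\imag t\hsys}-\E_x)\rho]=0$ for every state $\rho$; since the operator in brackets is self-adjoint, it must vanish, giving $e^{\imag t\hsys}\E_x e^{-\imag t\hsys}=\E_x$ for all $t$, which is equivalent to $[\E_x,\hsys]=\zero$.

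For sufficiency, suppose $[\E,\hsys]=\zero$. I would build the scheme $\mm_\beta=(\ha,\happ,\beta,\ee,\Z)$ in which the probe duplicates the system, $\ha=\hs$ and $\happ=\hsys$, the interaction $\ee(\cdot)=U(\cdot)U^\dagger$ is the unitary swap channel, and the pointer is $\Z=\E$. Verifying \defref{defn:thermal-measurement} is then routine: the probe sits in $\xi_\beta$ by construction; the swap channel is unitary, hence bistochastic; because $\happ=\hsys$ the total Hamiltonian $H$ is invariant under swap, so $[U,H]=\zero$ yields $\ee^*(H)=H$ and, by \lemref{lemma:bistochastic-fixed-point}, conservation of all energy moments; and the Yanase condition $[\Z,\happ]=[\E,\hsys]=\zero$ is exactly the hypothesis. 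Finally, $\tr[(\one\sub{\s}\otimes\E_x)\ee(\rho\otimes\xi_\beta)]=\tr[(\one\sub{\s}\otimes\E_x)(\xi_\beta\otimes\rho)]=\tr[\E_x\rho]$, so by \defref{defn:thermal-instrument} $\E$ is thermal.

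I expect the main obstacle to be conceptual rather than computational, and to sit in the necessity direction: one must first notice that a thermal \emph{observable} automatically drags along a genuine thermal \emph{instrument} compatible with it, so that \lemref{lemma:thermal-instrument-covariance} applies, and then see that tracing the covariance identity cleanly transfers invariance from the instrument down to its effects. The sufficiency direction is a verification whose only delicate point is that energy conservation of the swap forces $\happ=\hsys$ exactly, which is precisely why the probe must be an identical copy of the system.
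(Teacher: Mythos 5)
Your proposal is correct and follows essentially the same route as the paper: necessity via the time-translation covariance of the associated thermal instrument (your trace-duality computation is just the Heisenberg-picture argument the paper uses, via $\E_x = \ii_x^*(\onesys)$), and sufficiency via the same trivial swap scheme with $\ha \simeq \hs$, $\happ = \hsys$, and $\Z = \E$. The only cosmetic difference is that the paper explicitly derives $[\E_x,\hsys]=\zero$ from invariance by differentiating $\E_x(t)$, whereas you assert the (elementary) equivalence.
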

\begin{proof}
Let us first show the only if statement. Recall that an instrument $\ii$ is compatible with observable $\E$ if it holds that $\tr[\ii_x(\rho)] = \tr[\E_x \rho]$ for all $\rho$ and $x$, which can equivalently be stated as   $\E_x = \ii_x^*(\onesys)$ for all $x$. Now assume that $\E$ is a thermal observable, so that by \defref{defn:thermal-instrument} it admits a thermodynamically free measurement scheme, and hence must be compatible with a thermal instrument. By \lemref{lemma:thermal-instrument-covariance}, all thermal instruments are covariant. Since covariance in the Schr\"odinger picture trivially implies covariance in the Heisenberg picture, this implies that 
\begin{align*}
    \E_x &=  \ii^*_x(e^{\imag t \hsys} \onesys e^{-\imag t \hsys})= e^{\imag t \hsys} \ii^*_x(\onesys)e^{-\imag t \hsys} = e^{\imag t \hsys} \E_x e^{-\imag t \hsys} =: \E_x(t)
\end{align*}
holds for all $x$ and $t$. That is, covariance of $\ii$ implies invariance of $\E$. While  commutation of $\E$ with $\hsys$ trivially implies that $\E_x(t) = \E_x$, we now show that the converse implication also holds.    Given that $\frac{d}{dt} \E_x(t) = -\imag [\E_x(t), \hsys]$, we may write
 \begin{align*}
     \E_x(t) = \E_x -\imag \int_0^t d t_1 [\E_x(t_1), \hsys]. 
 \end{align*}
  But since $\E_x(t) = \E_x$  for all $t$, the above equation simplifies to $\E_x = \E_x - \imag t [\E_x, \hsys]$ for all $t$, which  can only be satisfied if it holds that $[\E_x, \hsys]=\zero$. 

Now we prove the if statement. Assume that $\E$ commutes with $\hsys$. By \defref{defn:thermal-instrument}, $\E$ is a thermal observable if it admits a thermodynamically free measurement scheme $\mm_\beta:= (\ha, \happ, \beta, \ee , \Z)$. Let us choose $\mm_\beta$ to be ``trivial'', i.e., let us choose a probe  that is identical to the measured system,    $\ha \simeq \hs$ and $\happ = \hsys$, which implies that the Gibbs states for the systems are also identical,   i.e., $\xi_\beta = \tau_\beta$. Let us also choose $\ee$ as a unitary swap channel acting in $\hs\otimes \ha$, so that $\ee(A\otimes B) = B \otimes A$ for all $A, B$. Such an $\ee$ is clearly bistochastic and conserves the total Hamiltonian. Finally, since $[\E,\hsys]=\zero$, then we may choose $\Z = \E$, which satisfies the Yanase condition. By \eq{eq:instrument-dilation}, the operations of the implemented instrument $\ii$ read
\begin{align*}
 \ii_x(\rho) &= \tr\sub{\aa}[(\onesys \otimes \E_x) \ee(\rho \otimes \tau_\beta)]  = \tr\sub{\aa}[(\onesys \otimes \E_x) \tau_\beta \otimes \rho] = \tr[\E_x \rho]\tau_\beta  
\end{align*}
for all $x$ and $\rho$. Since $\tr[\ii_x(\rho)] = \tr[\E_x \rho]$ for all $x$ and $\rho$, then the measured observable is $\E$, and so any $\E$ commuting with $\hsys$ is a thermal observable. 
\end{proof}
Let us highlight the fact that the proof for sufficiency of $\E$ commuting with $\hsys$ used a trivial  measurement scheme, which implements a  trivial instrument, that is,  for all input states $\rho$ and outcomes $x$, the system will be transformed to the Gibbs state $\tau_\beta$.  Therefore, while every observable commuting with the Hamiltonian is thermal, it may be the case that not all time-translation covariant instruments are thermal. Specifically, it is possible that some thermal observables do not admit a thermal instrument that will not thermalise the system. In fact, this is precisely the case for rank-1 observables. 

\section{Proof of Proposition 1}\label{app:second-law}

Let $\mm_\beta := (\ha, \happ, \beta, \ee, \Z)$ be a thermodynamically free measurement scheme for the observable $\E$ with instrument $\ii$ acting in $\hs$. Now, let us  define the instrument $\Phi$ with operations $\Phi_x : \lo(\hs) \to \lo(\hs \otimes \kk), \rho \mapsto \ii_x(\rho) \otimes |x\>\<x|$ where $\{\ket{x}\}$ is an orthonormal basis that spans $\kk$. Note that $\Phi$ is an entirely fictitious dilation, and is employed only to facilitate the proof, and should not be assigned any physical interpretation.  The action of the channel $\Phi_\xx(\cdot) := \sum_{x\in \xx} \Phi_x(\cdot)$ on the states $\rho$ and $\tau_\beta$ can be written as
\begin{align*}
    \Phi_\xx(\rho) &= \sum_{x\in \xx} \ii_x(\rho) \otimes |x\>\<x| = \sum_{x\in \xx} \tr[\E_x \rho] \rho_x \otimes |x\>\<x|, \nonumber \\
     \Phi_\xx(\tau_\beta) &= \sum_{x\in \xx} \ii_x(\tau_\beta) \otimes |x\>\<x| = \sum_{x\in \xx} \tr[\E_x \tau_\beta] \tau_\beta \otimes |x\>\<x|,
\end{align*}
where  we define $\rho_x := \ii_x(\rho)/\tr[\E_x \rho]$ for any $\rho$ and $x$ such that $\tr[\E_x \rho]>0$, and $\rho_x := \zero$ otherwise, and where the second line follows from \lemref{lemma:thermal-instrument-Gibbs}. Now define by $\bm{p}:= \{p(x) = \tr[\E_x \rho] : x\in \xx\}$ and $\bm{q}:= \{q(x) =\tr[\E_x \tau_\beta] : x\in \xx\}$ the probability vectors arising from a measurement of $\E$ in the states $\rho$ and $\tau_\beta$, respectively. By  the data processing inequality, and the ``direct sum'' property of the relative entropy (Proposition 4.3 of Ref. \cite{Khatri2020}), it holds that
\begin{align*}
    S(\rho \| \tau_\beta) &\geqslant  S(\Phi_\xx(\rho) \| \Phi_\xx(\tau_\beta))= D(\bm{p}\|\bm{q} ) +  \sum_{x\in \xx} \tr[\E_x \rho]  S(\rho_x \| \tau_\beta),
\end{align*}
where $S(\rho \| \sigma) := \tr[\rho (\lnn{\rho} - \lnn{\sigma})] \geqslant 0$ is the quantum relative entropy between states $\rho$ and $\sigma$ whenever $\supp(\rho) \subseteq \supp(\sigma)$, vanishing if and only if $\rho = \sigma$,  and $D(\bm{p}\|\bm{q} ) := \sum_x p(x) \lnn{p(x)/ q(x)} \geqslant 0$ is the classical relative entropy between probability vectors $\bm{p}$ and $\bm{q}$ whenever $p(x)>0 \implies q(x) >0 $, vanishing if and only if  $\bm{p} = \bm{q}$. Since a Gibbs state is full-rank, then  $\supp(\rho) \subseteq \supp(\tau_\beta)$ holds  for all $\rho$, while  $\tr[\E_x \tau_\beta]>0$  holds for all $x$ for which $\E_x \ne \zero$. As such,   the quantities on both sides of the above equation are always finite and non-negative.   By \eq{eq:extractable-work}, \eq{eq:average-extractable-work}, and the above, we thus obtain the bound
\begin{align*}
  W_\rho \geqslant \beta^{-1} D(\bm{p}\|\bm{q} ) + \avg{W_\rho^\ii} \geqslant \avg{W_\rho^\ii}.
\end{align*}
Since $D(\bm{p}\|\bm{q} ) > 0$ whenever the probability vectors $\bm{p}$ and $\bm{q}$ differ in at least one entry, then it will hold that $\avg{W_\rho^\ii} = W_\rho$ only if $\bm{p} = \bm{q}$.

\section{Proof of Proposition 2}\label{app:nuclear-thermal}

Recall from \lemref{lemma:thermal-instrument-Gibbs} that  if $\ii$ is a thermal $\E$-instrument, then its operations must satisfy \eq{eq:thermal-instrument-Gibbs-input}, i.e., it must hold that $\ii_x(\tau_\beta) = \tr[\E_x \tau_\beta] \tau_\beta$ for all $x$. Now, an  $\E$-compatible instrument $\ii$ is  nuclear if its operations satisfy
\begin{align}\label{eq:nuclear-instrument}
    \ii_x(\rho) = \tr[\E_x \rho] \sigma_x
\end{align}
for all $\rho$ and $x$, where $\{\sigma_x\}$ is a family of states that are independent of the input state $\rho$.  Comparing \eq{eq:thermal-instrument-Gibbs-input} with 
 \eq{eq:nuclear-instrument} for the input state $\rho = \tau_\beta$ demonstrates that if $\ii$ is a nuclear thermal  instrument, then we must have $\sigma_x = \tau_\beta$ for all $x$. As such,  the operations of  $\ii$ must satisfy 
$
\ii_x(\rho) = \tr[\E_x \rho] \tau_\beta    
$
 for all $\rho$ and $x$.

\bibliography{Projects-Thermal-Measurement.bib}

\begin{thebibliography}{85}%
\makeatletter
\providecommand \@ifxundefined [1]{%
 \@ifx{#1\undefined}
}%
\providecommand \@ifnum [1]{%
 \ifnum #1\expandafter \@firstoftwo
 \else \expandafter \@secondoftwo
 \fi
}%
\providecommand \@ifx [1]{%
 \ifx #1\expandafter \@firstoftwo
 \else \expandafter \@secondoftwo
 \fi
}%
\providecommand \natexlab [1]{#1}%
\providecommand \enquote  [1]{``#1''}%
\providecommand \bibnamefont  [1]{#1}%
\providecommand \bibfnamefont [1]{#1}%
\providecommand \citenamefont [1]{#1}%
\providecommand \href@noop [0]{\@secondoftwo}%
\providecommand \href [0]{\begingroup \@sanitize@url \@href}%
\providecommand \@href[1]{\@@startlink{#1}\@@href}%
\providecommand \@@href[1]{\endgroup#1\@@endlink}%
\providecommand \@sanitize@url [0]{\catcode `\\12\catcode `\$12\catcode
  `\&12\catcode `\#12\catcode `\^12\catcode `\_12\catcode `\%12\relax}%
\providecommand \@@startlink[1]{}%
\providecommand \@@endlink[0]{}%
\providecommand \url  [0]{\begingroup\@sanitize@url \@url }%
\providecommand \@url [1]{\endgroup\@href {#1}{\urlprefix }}%
\providecommand \urlprefix  [0]{URL }%
\providecommand \Eprint [0]{\href }%
\providecommand \doibase [0]{https://doi.org/}%
\providecommand \selectlanguage [0]{\@gobble}%
\providecommand \bibinfo  [0]{\@secondoftwo}%
\providecommand \bibfield  [0]{\@secondoftwo}%
\providecommand \translation [1]{[#1]}%
\providecommand \BibitemOpen [0]{}%
\providecommand \bibitemStop [0]{}%
\providecommand \bibitemNoStop [0]{.\EOS\space}%
\providecommand \EOS [0]{\spacefactor3000\relax}%
\providecommand \BibitemShut  [1]{\csname bibitem#1\endcsname}%
\let\auto@bib@innerbib\@empty
\bibitem [{\citenamefont {Huber}\ \emph {et~al.}(2015)\citenamefont {Huber},
  \citenamefont {Perarnau-Llobet}, \citenamefont {Hovhannisyan}, \citenamefont
  {Skrzypczyk}, \citenamefont {Kl{\"{o}}ckl}, \citenamefont {Brunner},\ and\
  \citenamefont {Ac{\'{i}}n}}]{Huber2015}%
  \BibitemOpen
  \bibfield  {author} {\bibinfo {author} {\bibfnamefont {M.}~\bibnamefont
  {Huber}}, \bibinfo {author} {\bibfnamefont {M.}~\bibnamefont
  {Perarnau-Llobet}}, \bibinfo {author} {\bibfnamefont {K.~V.}\ \bibnamefont
  {Hovhannisyan}}, \bibinfo {author} {\bibfnamefont {P.}~\bibnamefont
  {Skrzypczyk}}, \bibinfo {author} {\bibfnamefont {C.}~\bibnamefont
  {Kl{\"{o}}ckl}}, \bibinfo {author} {\bibfnamefont {N.}~\bibnamefont
  {Brunner}},\ and\ \bibinfo {author} {\bibfnamefont {A.}~\bibnamefont
  {Ac{\'{i}}n}},\ }\href {https://doi.org/10.1088/1367-2630/17/6/065008}
  {\bibfield  {journal} {\bibinfo  {journal} {New J. Phys.}\ }\textbf {\bibinfo
  {volume} {17}},\ \bibinfo {pages} {065008} (\bibinfo {year}
  {2015})}\BibitemShut {NoStop}%
\bibitem [{\citenamefont {Bedingham}\ and\ \citenamefont
  {Maroney}(2016)}]{Bedingham2016a}%
  \BibitemOpen
  \bibfield  {author} {\bibinfo {author} {\bibfnamefont {D.~J.}\ \bibnamefont
  {Bedingham}}\ and\ \bibinfo {author} {\bibfnamefont {O.~J.~E.}\ \bibnamefont
  {Maroney}},\ }\href {https://doi.org/10.1088/1367-2630/18/11/113050}
  {\bibfield  {journal} {\bibinfo  {journal} {New J. Phys.}\ }\textbf {\bibinfo
  {volume} {18}},\ \bibinfo {pages} {113050} (\bibinfo {year}
  {2016})}\BibitemShut {NoStop}%
\bibitem [{\citenamefont {Faist}\ and\ \citenamefont
  {Renner}(2018)}]{Faist2017}%
  \BibitemOpen
  \bibfield  {author} {\bibinfo {author} {\bibfnamefont {P.}~\bibnamefont
  {Faist}}\ and\ \bibinfo {author} {\bibfnamefont {R.}~\bibnamefont {Renner}},\
  }\href {https://doi.org/10.1103/PhysRevX.8.021011} {\bibfield  {journal}
  {\bibinfo  {journal} {Phys. Rev. X}\ }\textbf {\bibinfo {volume} {8}},\
  \bibinfo {pages} {021011} (\bibinfo {year} {2018})}\BibitemShut {NoStop}%
\bibitem [{\citenamefont {Barato}\ and\ \citenamefont
  {Seifert}(2017)}]{Barato2017}%
  \BibitemOpen
  \bibfield  {author} {\bibinfo {author} {\bibfnamefont {A.~C.}\ \bibnamefont
  {Barato}}\ and\ \bibinfo {author} {\bibfnamefont {U.}~\bibnamefont
  {Seifert}},\ }\href {https://doi.org/10.1088/1367-2630/aa77d0} {\bibfield
  {journal} {\bibinfo  {journal} {New J. Phys.}\ }\textbf {\bibinfo {volume}
  {19}},\ \bibinfo {pages} {073021} (\bibinfo {year} {2017})}\BibitemShut
  {NoStop}%
\bibitem [{\citenamefont {{De Chiara}}\ \emph {et~al.}(2018)\citenamefont {{De
  Chiara}}, \citenamefont {Solinas}, \citenamefont {Cerisola},\ and\
  \citenamefont {Roncaglia}}]{DeChiara2018}%
  \BibitemOpen
  \bibfield  {author} {\bibinfo {author} {\bibfnamefont {G.}~\bibnamefont {{De
  Chiara}}}, \bibinfo {author} {\bibfnamefont {P.}~\bibnamefont {Solinas}},
  \bibinfo {author} {\bibfnamefont {F.}~\bibnamefont {Cerisola}},\ and\
  \bibinfo {author} {\bibfnamefont {A.~J.}\ \bibnamefont {Roncaglia}},\ }in\
  \href {https://doi.org/10.1007/978-3-319-99046-0_14} {\emph {\bibinfo
  {booktitle} {Thermodyn. Quantum Regime. Fundam. Theor. Phys.}}}\ (\bibinfo
  {publisher} {Springer},\ \bibinfo {year} {2018})\ pp.\ \bibinfo {pages}
  {337--362}\BibitemShut {NoStop}%
\bibitem [{\citenamefont {Pearson}\ \emph {et~al.}(2021)\citenamefont
  {Pearson}, \citenamefont {Guryanova}, \citenamefont {Erker}, \citenamefont
  {Laird}, \citenamefont {Briggs}, \citenamefont {Huber},\ and\ \citenamefont
  {Ares}}]{Pearson2021}%
  \BibitemOpen
  \bibfield  {author} {\bibinfo {author} {\bibfnamefont {A.~N.}\ \bibnamefont
  {Pearson}}, \bibinfo {author} {\bibfnamefont {Y.}~\bibnamefont {Guryanova}},
  \bibinfo {author} {\bibfnamefont {P.}~\bibnamefont {Erker}}, \bibinfo
  {author} {\bibfnamefont {E.~A.}\ \bibnamefont {Laird}}, \bibinfo {author}
  {\bibfnamefont {G.~A.~D.}\ \bibnamefont {Briggs}}, \bibinfo {author}
  {\bibfnamefont {M.}~\bibnamefont {Huber}},\ and\ \bibinfo {author}
  {\bibfnamefont {N.}~\bibnamefont {Ares}},\ }\href
  {https://doi.org/10.1103/PhysRevX.11.021029} {\bibfield  {journal} {\bibinfo
  {journal} {Phys. Rev. X}\ }\textbf {\bibinfo {volume} {11}},\ \bibinfo
  {pages} {021029} (\bibinfo {year} {2021})}\BibitemShut {NoStop}%
\bibitem [{\citenamefont {Chiribella}\ \emph {et~al.}(2021)\citenamefont
  {Chiribella}, \citenamefont {Yang},\ and\ \citenamefont
  {Renner}}]{Chiribella2021a}%
  \BibitemOpen
  \bibfield  {author} {\bibinfo {author} {\bibfnamefont {G.}~\bibnamefont
  {Chiribella}}, \bibinfo {author} {\bibfnamefont {Y.}~\bibnamefont {Yang}},\
  and\ \bibinfo {author} {\bibfnamefont {R.}~\bibnamefont {Renner}},\ }\href
  {https://doi.org/10.1103/PhysRevX.11.021014} {\bibfield  {journal} {\bibinfo
  {journal} {Phys. Rev. X}\ }\textbf {\bibinfo {volume} {11}},\ \bibinfo
  {pages} {021014} (\bibinfo {year} {2021})}\BibitemShut {NoStop}%
\bibitem [{\citenamefont {Allahverdyan}\ and\ \citenamefont
  {Nieuwenhuizen}(2005)}]{Allahverdyan2005}%
  \BibitemOpen
  \bibfield  {author} {\bibinfo {author} {\bibfnamefont {A.~E.}\ \bibnamefont
  {Allahverdyan}}\ and\ \bibinfo {author} {\bibfnamefont {T.~M.}\ \bibnamefont
  {Nieuwenhuizen}},\ }\href {https://doi.org/10.1103/PhysRevE.71.066102}
  {\bibfield  {journal} {\bibinfo  {journal} {Phys. Rev. E}\ }\textbf {\bibinfo
  {volume} {71}},\ \bibinfo {pages} {066102} (\bibinfo {year}
  {2005})}\BibitemShut {NoStop}%
\bibitem [{\citenamefont {Talkner}\ \emph {et~al.}(2007)\citenamefont
  {Talkner}, \citenamefont {Lutz},\ and\ \citenamefont
  {H{\"{a}}nggi}}]{Talkner2007}%
  \BibitemOpen
  \bibfield  {author} {\bibinfo {author} {\bibfnamefont {P.}~\bibnamefont
  {Talkner}}, \bibinfo {author} {\bibfnamefont {E.}~\bibnamefont {Lutz}},\ and\
  \bibinfo {author} {\bibfnamefont {P.}~\bibnamefont {H{\"{a}}nggi}},\ }\href
  {https://doi.org/10.1103/PhysRevE.75.050102} {\bibfield  {journal} {\bibinfo
  {journal} {Phys. Rev. E}\ }\textbf {\bibinfo {volume} {75}},\ \bibinfo
  {pages} {050102} (\bibinfo {year} {2007})}\BibitemShut {NoStop}%
\bibitem [{\citenamefont {Allahverdyan}(2014)}]{Allahverdyan2014}%
  \BibitemOpen
  \bibfield  {author} {\bibinfo {author} {\bibfnamefont {A.~E.}\ \bibnamefont
  {Allahverdyan}},\ }\href {https://doi.org/10.1103/PhysRevE.90.032137}
  {\bibfield  {journal} {\bibinfo  {journal} {Phys. Rev. E}\ }\textbf {\bibinfo
  {volume} {90}},\ \bibinfo {pages} {032137} (\bibinfo {year}
  {2014})}\BibitemShut {NoStop}%
\bibitem [{\citenamefont {Perarnau-Llobet}\ \emph {et~al.}(2017)\citenamefont
  {Perarnau-Llobet}, \citenamefont {B{\"{a}}umer}, \citenamefont
  {Hovhannisyan}, \citenamefont {Huber},\ and\ \citenamefont
  {Acin}}]{Perarnau-Llobet2016a}%
  \BibitemOpen
  \bibfield  {author} {\bibinfo {author} {\bibfnamefont {M.}~\bibnamefont
  {Perarnau-Llobet}}, \bibinfo {author} {\bibfnamefont {E.}~\bibnamefont
  {B{\"{a}}umer}}, \bibinfo {author} {\bibfnamefont {K.~V.}\ \bibnamefont
  {Hovhannisyan}}, \bibinfo {author} {\bibfnamefont {M.}~\bibnamefont
  {Huber}},\ and\ \bibinfo {author} {\bibfnamefont {A.}~\bibnamefont {Acin}},\
  }\href {https://doi.org/10.1103/PhysRevLett.118.070601} {\bibfield  {journal}
  {\bibinfo  {journal} {Phys. Rev. Lett.}\ }\textbf {\bibinfo {volume} {118}},\
  \bibinfo {pages} {070601} (\bibinfo {year} {2017})}\BibitemShut {NoStop}%
\bibitem [{\citenamefont {Deffner}\ \emph {et~al.}(2016)\citenamefont
  {Deffner}, \citenamefont {Paz},\ and\ \citenamefont {Zurek}}]{Deffner2016a}%
  \BibitemOpen
  \bibfield  {author} {\bibinfo {author} {\bibfnamefont {S.}~\bibnamefont
  {Deffner}}, \bibinfo {author} {\bibfnamefont {J.~P.}\ \bibnamefont {Paz}},\
  and\ \bibinfo {author} {\bibfnamefont {W.~H.}\ \bibnamefont {Zurek}},\ }\href
  {https://doi.org/10.1103/PhysRevE.94.010103} {\bibfield  {journal} {\bibinfo
  {journal} {Phys. Rev. E}\ }\textbf {\bibinfo {volume} {94}},\ \bibinfo
  {pages} {010103(R)} (\bibinfo {year} {2016})}\BibitemShut {NoStop}%
\bibitem [{\citenamefont {Hovhannisyan}\ and\ \citenamefont
  {Imparato}(2021)}]{Hovhannisyan2021}%
  \BibitemOpen
  \bibfield  {author} {\bibinfo {author} {\bibfnamefont {K.~V.}\ \bibnamefont
  {Hovhannisyan}}\ and\ \bibinfo {author} {\bibfnamefont {A.}~\bibnamefont
  {Imparato}},\ }\href {http://arxiv.org/abs/2104.09364} {\  (\bibinfo {year}
  {2021})},\ \Eprint {https://arxiv.org/abs/2104.09364} {arXiv:2104.09364}
  \BibitemShut {NoStop}%
\bibitem [{\citenamefont {Beyer}\ \emph {et~al.}(2022)\citenamefont {Beyer},
  \citenamefont {Uola}, \citenamefont {Luoma},\ and\ \citenamefont
  {Strunz}}]{Beyer2021}%
  \BibitemOpen
  \bibfield  {author} {\bibinfo {author} {\bibfnamefont {K.}~\bibnamefont
  {Beyer}}, \bibinfo {author} {\bibfnamefont {R.}~\bibnamefont {Uola}},
  \bibinfo {author} {\bibfnamefont {K.}~\bibnamefont {Luoma}},\ and\ \bibinfo
  {author} {\bibfnamefont {W.~T.}\ \bibnamefont {Strunz}},\ }\href
  {https://doi.org/10.1103/PhysRevE.106.L022101} {\bibfield  {journal}
  {\bibinfo  {journal} {Phys. Rev. E}\ }\textbf {\bibinfo {volume} {106}},\
  \bibinfo {pages} {L022101} (\bibinfo {year} {2022})}\BibitemShut {NoStop}%
\bibitem [{\citenamefont {Brand{\~{a}}o}\ \emph {et~al.}(2013)\citenamefont
  {Brand{\~{a}}o}, \citenamefont {Horodecki}, \citenamefont {Oppenheim},
  \citenamefont {Renes},\ and\ \citenamefont {Spekkens}}]{Brandao2013}%
  \BibitemOpen
  \bibfield  {author} {\bibinfo {author} {\bibfnamefont {F.~G. S.~L.}\
  \bibnamefont {Brand{\~{a}}o}}, \bibinfo {author} {\bibfnamefont
  {M.}~\bibnamefont {Horodecki}}, \bibinfo {author} {\bibfnamefont
  {J.}~\bibnamefont {Oppenheim}}, \bibinfo {author} {\bibfnamefont {J.~M.}\
  \bibnamefont {Renes}},\ and\ \bibinfo {author} {\bibfnamefont {R.~W.}\
  \bibnamefont {Spekkens}},\ }\href
  {https://doi.org/10.1103/PhysRevLett.111.250404} {\bibfield  {journal}
  {\bibinfo  {journal} {Phys. Rev. Lett.}\ }\textbf {\bibinfo {volume} {111}},\
  \bibinfo {pages} {1} (\bibinfo {year} {2013})}\BibitemShut {NoStop}%
\bibitem [{\citenamefont {Goold}\ \emph {et~al.}(2016)\citenamefont {Goold},
  \citenamefont {Huber}, \citenamefont {Riera}, \citenamefont {del Rio},\ and\
  \citenamefont {Skrzypczyk}}]{Goold2015}%
  \BibitemOpen
  \bibfield  {author} {\bibinfo {author} {\bibfnamefont {J.}~\bibnamefont
  {Goold}}, \bibinfo {author} {\bibfnamefont {M.}~\bibnamefont {Huber}},
  \bibinfo {author} {\bibfnamefont {A.}~\bibnamefont {Riera}}, \bibinfo
  {author} {\bibfnamefont {L.}~\bibnamefont {del Rio}},\ and\ \bibinfo {author}
  {\bibfnamefont {P.}~\bibnamefont {Skrzypczyk}},\ }\href
  {https://doi.org/10.1088/1751-8113/49/14/143001} {\bibfield  {journal}
  {\bibinfo  {journal} {J. Phys. A Math. Theor.}\ }\textbf {\bibinfo {volume}
  {49}},\ \bibinfo {pages} {143001} (\bibinfo {year} {2016})}\BibitemShut
  {NoStop}%
\bibitem [{\citenamefont {Lostaglio}(2019)}]{Lostaglio2018a}%
  \BibitemOpen
  \bibfield  {author} {\bibinfo {author} {\bibfnamefont {M.}~\bibnamefont
  {Lostaglio}},\ }\href {https://doi.org/10.1088/1361-6633/ab46e5} {\bibfield
  {journal} {\bibinfo  {journal} {Reports Prog. Phys.}\ }\textbf {\bibinfo
  {volume} {82}},\ \bibinfo {pages} {114001} (\bibinfo {year}
  {2019})}\BibitemShut {NoStop}%
\bibitem [{\citenamefont {Horodecki}\ and\ \citenamefont
  {Oppenheim}(2013)}]{Horodecki2013}%
  \BibitemOpen
  \bibfield  {author} {\bibinfo {author} {\bibfnamefont {M.}~\bibnamefont
  {Horodecki}}\ and\ \bibinfo {author} {\bibfnamefont {J.}~\bibnamefont
  {Oppenheim}},\ }\href {https://doi.org/10.1038/ncomms3059} {\bibfield
  {journal} {\bibinfo  {journal} {Nat. Commun.}\ }\textbf {\bibinfo {volume}
  {4}},\ \bibinfo {pages} {2059} (\bibinfo {year} {2013})}\BibitemShut
  {NoStop}%
\bibitem [{\citenamefont {Navascu{\'{e}}s}\ and\ \citenamefont
  {Garc{\'{i}}a-Pintos}(2015)}]{Navascues2015a}%
  \BibitemOpen
  \bibfield  {author} {\bibinfo {author} {\bibfnamefont {M.}~\bibnamefont
  {Navascu{\'{e}}s}}\ and\ \bibinfo {author} {\bibfnamefont {L.~P.}\
  \bibnamefont {Garc{\'{i}}a-Pintos}},\ }\href
  {https://doi.org/10.1103/PhysRevLett.115.010405} {\bibfield  {journal}
  {\bibinfo  {journal} {Phys. Rev. Lett.}\ }\textbf {\bibinfo {volume} {115}},\
  \bibinfo {pages} {1} (\bibinfo {year} {2015})},\ \Eprint
  {https://arxiv.org/abs/1501.02597} {arXiv:1501.02597} \BibitemShut {NoStop}%
\bibitem [{\citenamefont {Brand{\~{a}}o}\ \emph {et~al.}(2015)\citenamefont
  {Brand{\~{a}}o}, \citenamefont {Horodecki}, \citenamefont {Ng}, \citenamefont
  {Oppenheim},\ and\ \citenamefont {Wehner}}]{Brandao2015a}%
  \BibitemOpen
  \bibfield  {author} {\bibinfo {author} {\bibfnamefont {F.}~\bibnamefont
  {Brand{\~{a}}o}}, \bibinfo {author} {\bibfnamefont {M.}~\bibnamefont
  {Horodecki}}, \bibinfo {author} {\bibfnamefont {N.}~\bibnamefont {Ng}},
  \bibinfo {author} {\bibfnamefont {J.}~\bibnamefont {Oppenheim}},\ and\
  \bibinfo {author} {\bibfnamefont {S.}~\bibnamefont {Wehner}},\ }\href
  {https://doi.org/10.1073/pnas.1411728112} {\bibfield  {journal} {\bibinfo
  {journal} {Proc. Natl. Acad. Sci.}\ }\textbf {\bibinfo {volume} {112}},\
  \bibinfo {pages} {3275} (\bibinfo {year} {2015})}\BibitemShut {NoStop}%
\bibitem [{\citenamefont {Perry}\ \emph {et~al.}(2018)\citenamefont {Perry},
  \citenamefont {{\'{C}}wikli{\'{n}}ski}, \citenamefont {Anders}, \citenamefont
  {Horodecki},\ and\ \citenamefont {Oppenheim}}]{Perry2015}%
  \BibitemOpen
  \bibfield  {author} {\bibinfo {author} {\bibfnamefont {C.}~\bibnamefont
  {Perry}}, \bibinfo {author} {\bibfnamefont {P.}~\bibnamefont
  {{\'{C}}wikli{\'{n}}ski}}, \bibinfo {author} {\bibfnamefont {J.}~\bibnamefont
  {Anders}}, \bibinfo {author} {\bibfnamefont {M.}~\bibnamefont {Horodecki}},\
  and\ \bibinfo {author} {\bibfnamefont {J.}~\bibnamefont {Oppenheim}},\ }\href
  {https://doi.org/10.1103/PhysRevX.8.041049} {\bibfield  {journal} {\bibinfo
  {journal} {Phys. Rev. X}\ }\textbf {\bibinfo {volume} {8}},\ \bibinfo {pages}
  {041049} (\bibinfo {year} {2018})}\BibitemShut {NoStop}%
\bibitem [{\citenamefont {Lostaglio}\ \emph {et~al.}(2018)\citenamefont
  {Lostaglio}, \citenamefont {Alhambra},\ and\ \citenamefont
  {Perry}}]{Lostaglio2016}%
  \BibitemOpen
  \bibfield  {author} {\bibinfo {author} {\bibfnamefont {M.}~\bibnamefont
  {Lostaglio}}, \bibinfo {author} {\bibfnamefont {{\'{A}}.~M.}\ \bibnamefont
  {Alhambra}},\ and\ \bibinfo {author} {\bibfnamefont {C.}~\bibnamefont
  {Perry}},\ }\href {https://doi.org/10.22331/q-2018-02-08-52} {\bibfield
  {journal} {\bibinfo  {journal} {Quantum}\ }\textbf {\bibinfo {volume} {2}},\
  \bibinfo {pages} {52} (\bibinfo {year} {2018})}\BibitemShut {NoStop}%
\bibitem [{\citenamefont {Mazurek}\ and\ \citenamefont
  {Horodecki}(2018)}]{Mazurek2018}%
  \BibitemOpen
  \bibfield  {author} {\bibinfo {author} {\bibfnamefont {P.}~\bibnamefont
  {Mazurek}}\ and\ \bibinfo {author} {\bibfnamefont {M.}~\bibnamefont
  {Horodecki}},\ }\href {https://doi.org/10.1088/1367-2630/aac057} {\bibfield
  {journal} {\bibinfo  {journal} {New J. Phys.}\ }\textbf {\bibinfo {volume}
  {20}},\ \bibinfo {pages} {053040} (\bibinfo {year} {2018})}\BibitemShut
  {NoStop}%
\bibitem [{\citenamefont {Ahmadi}\ \emph {et~al.}(2013)\citenamefont {Ahmadi},
  \citenamefont {Jennings},\ and\ \citenamefont {Rudolph}}]{Ahmadi2013b}%
  \BibitemOpen
  \bibfield  {author} {\bibinfo {author} {\bibfnamefont {M.}~\bibnamefont
  {Ahmadi}}, \bibinfo {author} {\bibfnamefont {D.}~\bibnamefont {Jennings}},\
  and\ \bibinfo {author} {\bibfnamefont {T.}~\bibnamefont {Rudolph}},\ }\href
  {https://doi.org/10.1088/1367-2630/15/1/013057} {\bibfield  {journal}
  {\bibinfo  {journal} {New J. Phys.}\ }\textbf {\bibinfo {volume} {15}},\
  \bibinfo {pages} {013057} (\bibinfo {year} {2013})}\BibitemShut {NoStop}%
\bibitem [{\citenamefont {Miyadera}\ \emph {et~al.}(2016)\citenamefont
  {Miyadera}, \citenamefont {Loveridge},\ and\ \citenamefont
  {Busch}}]{Miyadera2015e}%
  \BibitemOpen
  \bibfield  {author} {\bibinfo {author} {\bibfnamefont {T.}~\bibnamefont
  {Miyadera}}, \bibinfo {author} {\bibfnamefont {L.}~\bibnamefont
  {Loveridge}},\ and\ \bibinfo {author} {\bibfnamefont {P.}~\bibnamefont
  {Busch}},\ }\href {https://doi.org/10.1088/1751-8113/49/18/185301} {\bibfield
   {journal} {\bibinfo  {journal} {J. Phys. A Math. Theor.}\ }\textbf {\bibinfo
  {volume} {49}},\ \bibinfo {pages} {185301} (\bibinfo {year}
  {2016})}\BibitemShut {NoStop}%
\bibitem [{\citenamefont {Miyadera}\ and\ \citenamefont
  {Loveridge}(2020)}]{Miyadera2020a}%
  \BibitemOpen
  \bibfield  {author} {\bibinfo {author} {\bibfnamefont {T.}~\bibnamefont
  {Miyadera}}\ and\ \bibinfo {author} {\bibfnamefont {L.}~\bibnamefont
  {Loveridge}},\ }\href {https://doi.org/10.1088/1742-6596/1638/1/012008}
  {\bibfield  {journal} {\bibinfo  {journal} {J. Phys. Conf. Ser.}\ }\textbf
  {\bibinfo {volume} {1638}},\ \bibinfo {pages} {012008} (\bibinfo {year}
  {2020})}\BibitemShut {NoStop}%
\bibitem [{\citenamefont {Mohammady}\ \emph {et~al.}(2021)\citenamefont
  {Mohammady}, \citenamefont {Miyadera},\ and\ \citenamefont
  {Loveridge}}]{Mohammady2021a}%
  \BibitemOpen
  \bibfield  {author} {\bibinfo {author} {\bibfnamefont {M.~H.}\ \bibnamefont
  {Mohammady}}, \bibinfo {author} {\bibfnamefont {T.}~\bibnamefont
  {Miyadera}},\ and\ \bibinfo {author} {\bibfnamefont {L.}~\bibnamefont
  {Loveridge}},\ }\href {http://arxiv.org/abs/2110.11705} {\  (\bibinfo {year}
  {2021})},\ \Eprint {https://arxiv.org/abs/2110.11705} {arXiv:2110.11705}
  \BibitemShut {NoStop}%
\bibitem [{\citenamefont {Sagawa}\ and\ \citenamefont
  {Ueda}(2009)}]{Sagawa2009b}%
  \BibitemOpen
  \bibfield  {author} {\bibinfo {author} {\bibfnamefont {T.}~\bibnamefont
  {Sagawa}}\ and\ \bibinfo {author} {\bibfnamefont {M.}~\bibnamefont {Ueda}},\
  }\href {https://doi.org/10.1103/PhysRevLett.102.250602} {\bibfield  {journal}
  {\bibinfo  {journal} {Phys. Rev. Lett.}\ }\textbf {\bibinfo {volume} {102}},\
  \bibinfo {pages} {250602} (\bibinfo {year} {2009})}\BibitemShut {NoStop}%
\bibitem [{\citenamefont {Jacobs}(2012)}]{Jacobs2012a}%
  \BibitemOpen
  \bibfield  {author} {\bibinfo {author} {\bibfnamefont {K.}~\bibnamefont
  {Jacobs}},\ }\href {https://doi.org/10.1103/PhysRevE.86.040106} {\bibfield
  {journal} {\bibinfo  {journal} {Phys. Rev. E}\ }\textbf {\bibinfo {volume}
  {86}},\ \bibinfo {pages} {040106(R)} (\bibinfo {year} {2012})}\BibitemShut
  {NoStop}%
\bibitem [{\citenamefont {Lipka-Bartosik}\ and\ \citenamefont
  {Demkowicz-Dobrza{\'{n}}ski}(2018)}]{Lipka-Bartosik2018}%
  \BibitemOpen
  \bibfield  {author} {\bibinfo {author} {\bibfnamefont {P.}~\bibnamefont
  {Lipka-Bartosik}}\ and\ \bibinfo {author} {\bibfnamefont {R.}~\bibnamefont
  {Demkowicz-Dobrza{\'{n}}ski}},\ }\href
  {https://doi.org/10.1088/1751-8121/aae664} {\bibfield  {journal} {\bibinfo
  {journal} {J. Phys. A Math. Theor.}\ }\textbf {\bibinfo {volume} {51}},\
  \bibinfo {pages} {474001} (\bibinfo {year} {2018})}\BibitemShut {NoStop}%
\bibitem [{\citenamefont {Mohammady}\ and\ \citenamefont
  {Romito}(2019)}]{Mohammady2019c}%
  \BibitemOpen
  \bibfield  {author} {\bibinfo {author} {\bibfnamefont {M.~H.}\ \bibnamefont
  {Mohammady}}\ and\ \bibinfo {author} {\bibfnamefont {A.}~\bibnamefont
  {Romito}},\ }\href {https://doi.org/10.22331/q-2019-08-19-175} {\bibfield
  {journal} {\bibinfo  {journal} {Quantum}\ }\textbf {\bibinfo {volume} {3}},\
  \bibinfo {pages} {175} (\bibinfo {year} {2019})}\BibitemShut {NoStop}%
\bibitem [{\citenamefont {Busch}\ \emph {et~al.}(1995)\citenamefont {Busch},
  \citenamefont {Grabowski},\ and\ \citenamefont
  {Lahti}}]{PaulBuschMarianGrabowski1995}%
  \BibitemOpen
  \bibfield  {author} {\bibinfo {author} {\bibfnamefont {P.}~\bibnamefont
  {Busch}}, \bibinfo {author} {\bibfnamefont {M.}~\bibnamefont {Grabowski}},\
  and\ \bibinfo {author} {\bibfnamefont {P.~J.}\ \bibnamefont {Lahti}},\ }\href
  {https://doi.org/10.1007/978-3-540-49239-9} {\emph {\bibinfo {title}
  {{Operational Quantum Physics}}}},\ \bibinfo {series} {Lecture Notes in
  Physics Monographs}, Vol.~\bibinfo {volume} {31}\ (\bibinfo  {publisher}
  {Springer Berlin Heidelberg},\ \bibinfo {address} {Berlin, Heidelberg},\
  \bibinfo {year} {1995})\BibitemShut {NoStop}%
\bibitem [{\citenamefont {Busch}\ \emph {et~al.}(1996)\citenamefont {Busch},
  \citenamefont {Lahti},\ and\ \citenamefont {{Peter
  Mittelstaedt}}}]{Busch1996}%
  \BibitemOpen
  \bibfield  {author} {\bibinfo {author} {\bibfnamefont {P.}~\bibnamefont
  {Busch}}, \bibinfo {author} {\bibfnamefont {P.~J.}\ \bibnamefont {Lahti}},\
  and\ \bibinfo {author} {\bibnamefont {{Peter Mittelstaedt}}},\ }\href
  {https://doi.org/10.1007/978-3-540-37205-9} {\emph {\bibinfo {title} {{The
  Quantum Theory of Measurement}}}},\ \bibinfo {series} {Lecture Notes in
  Physics Monographs}, Vol.~\bibinfo {volume} {2}\ (\bibinfo  {publisher}
  {Springer Berlin Heidelberg},\ \bibinfo {address} {Berlin, Heidelberg},\
  \bibinfo {year} {1996})\BibitemShut {NoStop}%
\bibitem [{\citenamefont {Heinosaari}\ and\ \citenamefont
  {Ziman}(2011)}]{Heinosaari2011}%
  \BibitemOpen
  \bibfield  {author} {\bibinfo {author} {\bibfnamefont {T.}~\bibnamefont
  {Heinosaari}}\ and\ \bibinfo {author} {\bibfnamefont {M.}~\bibnamefont
  {Ziman}},\ }\href {https://doi.org/10.1017/CBO9781139031103} {\emph {\bibinfo
  {title} {{The Mathematical language of Quantum Theory}}}}\ (\bibinfo
  {publisher} {Cambridge University Press},\ \bibinfo {address} {Cambridge},\
  \bibinfo {year} {2011})\BibitemShut {NoStop}%
\bibitem [{\citenamefont {Busch}\ \emph {et~al.}(2016)\citenamefont {Busch},
  \citenamefont {Lahti}, \citenamefont {Pellonp{\"{a}}{\"{a}}},\ and\
  \citenamefont {Ylinen}}]{Busch2016a}%
  \BibitemOpen
  \bibfield  {author} {\bibinfo {author} {\bibfnamefont {P.}~\bibnamefont
  {Busch}}, \bibinfo {author} {\bibfnamefont {P.}~\bibnamefont {Lahti}},
  \bibinfo {author} {\bibfnamefont {J.-P.}\ \bibnamefont
  {Pellonp{\"{a}}{\"{a}}}},\ and\ \bibinfo {author} {\bibfnamefont
  {K.}~\bibnamefont {Ylinen}},\ }\href
  {https://doi.org/10.1007/978-3-319-43389-9} {\emph {\bibinfo {title}
  {{Quantum Measurement}}}},\ Theoretical and Mathematical Physics\ (\bibinfo
  {publisher} {Springer International Publishing},\ \bibinfo {address} {Cham},\
  \bibinfo {year} {2016})\BibitemShut {NoStop}%
\bibitem [{\citenamefont {Davies}\ and\ \citenamefont
  {Lewis}(1970)}]{Davies1970}%
  \BibitemOpen
  \bibfield  {author} {\bibinfo {author} {\bibfnamefont {E.~B.}\ \bibnamefont
  {Davies}}\ and\ \bibinfo {author} {\bibfnamefont {J.~T.}\ \bibnamefont
  {Lewis}},\ }\href {https://doi.org/10.1007/BF01647093} {\bibfield  {journal}
  {\bibinfo  {journal} {Commun. Math. Phys.}\ }\textbf {\bibinfo {volume}
  {17}},\ \bibinfo {pages} {239} (\bibinfo {year} {1970})}\BibitemShut
  {NoStop}%
\bibitem [{\citenamefont {Ozawa}(1984)}]{Ozawa1984}%
  \BibitemOpen
  \bibfield  {author} {\bibinfo {author} {\bibfnamefont {M.}~\bibnamefont
  {Ozawa}},\ }\href {https://doi.org/10.1063/1.526000} {\bibfield  {journal}
  {\bibinfo  {journal} {J. Math. Phys.}\ }\textbf {\bibinfo {volume} {25}},\
  \bibinfo {pages} {79} (\bibinfo {year} {1984})}\BibitemShut {NoStop}%
\bibitem [{\citenamefont {Reeb}\ and\ \citenamefont {Wolf}(2014)}]{Reeb2013a}%
  \BibitemOpen
  \bibfield  {author} {\bibinfo {author} {\bibfnamefont {D.}~\bibnamefont
  {Reeb}}\ and\ \bibinfo {author} {\bibfnamefont {M.~M.}\ \bibnamefont
  {Wolf}},\ }\href {https://doi.org/10.1088/1367-2630/16/10/103011} {\bibfield
  {journal} {\bibinfo  {journal} {New J. Phys.}\ }\textbf {\bibinfo {volume}
  {16}},\ \bibinfo {pages} {103011} (\bibinfo {year} {2014})}\BibitemShut
  {NoStop}%
\bibitem [{\citenamefont {Miller}\ \emph {et~al.}(2020)\citenamefont {Miller},
  \citenamefont {Guarnieri}, \citenamefont {Mitchison},\ and\ \citenamefont
  {Goold}}]{Miller2020}%
  \BibitemOpen
  \bibfield  {author} {\bibinfo {author} {\bibfnamefont {H.~J.~D.}\
  \bibnamefont {Miller}}, \bibinfo {author} {\bibfnamefont {G.}~\bibnamefont
  {Guarnieri}}, \bibinfo {author} {\bibfnamefont {M.~T.}\ \bibnamefont
  {Mitchison}},\ and\ \bibinfo {author} {\bibfnamefont {J.}~\bibnamefont
  {Goold}},\ }\href {https://doi.org/10.1103/PhysRevLett.125.160602} {\bibfield
   {journal} {\bibinfo  {journal} {Phys. Rev. Lett.}\ }\textbf {\bibinfo
  {volume} {125}},\ \bibinfo {pages} {160602} (\bibinfo {year}
  {2020})}\BibitemShut {NoStop}%
\bibitem [{\citenamefont {Alberti}\ and\ \citenamefont
  {Uhlmann}(1982)}]{Alberti1982}%
  \BibitemOpen
  \bibfield  {author} {\bibinfo {author} {\bibfnamefont {P.}~\bibnamefont
  {Alberti}}\ and\ \bibinfo {author} {\bibfnamefont {A.}~\bibnamefont
  {Uhlmann}},\ }\href@noop {} {\emph {\bibinfo {title} {{Stochasticity and
  Partial Order; Doubly Stochastic Maps and Unitary Mixing}}}}\ (\bibinfo
  {publisher} {Springer Dordrecht},\ \bibinfo {year} {1982})\BibitemShut
  {NoStop}%
\bibitem [{\citenamefont {Yanase}(1961)}]{Yanase1961}%
  \BibitemOpen
  \bibfield  {author} {\bibinfo {author} {\bibfnamefont {M.~M.}\ \bibnamefont
  {Yanase}},\ }\href {https://doi.org/10.1103/PhysRev.123.666} {\bibfield
  {journal} {\bibinfo  {journal} {Phys. Rev.}\ }\textbf {\bibinfo {volume}
  {123}},\ \bibinfo {pages} {666} (\bibinfo {year} {1961})}\BibitemShut
  {NoStop}%
\bibitem [{\citenamefont {Ozawa}(2002)}]{Ozawa2002}%
  \BibitemOpen
  \bibfield  {author} {\bibinfo {author} {\bibfnamefont {M.}~\bibnamefont
  {Ozawa}},\ }\href {https://doi.org/10.1103/PhysRevLett.88.050402} {\bibfield
  {journal} {\bibinfo  {journal} {Phys. Rev. Lett.}\ }\textbf {\bibinfo
  {volume} {88}},\ \bibinfo {pages} {050402} (\bibinfo {year}
  {2002})}\BibitemShut {NoStop}%
\bibitem [{\citenamefont {Loveridge}\ and\ \citenamefont
  {Busch}(2011)}]{Loveridge2011}%
  \BibitemOpen
  \bibfield  {author} {\bibinfo {author} {\bibfnamefont {L.}~\bibnamefont
  {Loveridge}}\ and\ \bibinfo {author} {\bibfnamefont {P.}~\bibnamefont
  {Busch}},\ }\href {https://doi.org/10.1140/epjd/e2011-10714-3} {\bibfield
  {journal} {\bibinfo  {journal} {Eur. Phys. J. D}\ }\textbf {\bibinfo {volume}
  {62}},\ \bibinfo {pages} {297} (\bibinfo {year} {2011})}\BibitemShut
  {NoStop}%
\bibitem [{\citenamefont {Wigner}(1952)}]{E.Wigner1952}%
  \BibitemOpen
  \bibfield  {author} {\bibinfo {author} {\bibfnamefont {E.~P.}\ \bibnamefont
  {Wigner}},\ }\href {https://doi.org/10.1007/BF01948686} {\bibfield  {journal}
  {\bibinfo  {journal} {Zeitschrift f{\"{u}}r Phys. A Hadron. Nucl.}\ }\textbf
  {\bibinfo {volume} {133}},\ \bibinfo {pages} {101} (\bibinfo {year}
  {1952})}\BibitemShut {NoStop}%
\bibitem [{\citenamefont {Busch}(2010)}]{Busch2010}%
  \BibitemOpen
  \bibfield  {author} {\bibinfo {author} {\bibfnamefont {P.}~\bibnamefont
  {Busch}},\ }\href {http://arxiv.org/abs/1012.4372} {\  (\bibinfo {year}
  {2010})},\ \Eprint {https://arxiv.org/abs/1012.4372} {arXiv:1012.4372}
  \BibitemShut {NoStop}%
\bibitem [{\citenamefont {Araki}\ and\ \citenamefont
  {Yanase}(1960)}]{Araki1960}%
  \BibitemOpen
  \bibfield  {author} {\bibinfo {author} {\bibfnamefont {H.}~\bibnamefont
  {Araki}}\ and\ \bibinfo {author} {\bibfnamefont {M.~M.}\ \bibnamefont
  {Yanase}},\ }\href {https://doi.org/10.1103/PhysRev.120.622} {\bibfield
  {journal} {\bibinfo  {journal} {Phys. Rev.}\ }\textbf {\bibinfo {volume}
  {120}},\ \bibinfo {pages} {622} (\bibinfo {year} {1960})}\BibitemShut
  {NoStop}%
\bibitem [{\citenamefont {Zhang}\ \emph {et~al.}(2017)\citenamefont {Zhang},
  \citenamefont {Yadin}, \citenamefont {Hou}, \citenamefont {Cao},
  \citenamefont {Liu}, \citenamefont {Huang}, \citenamefont {Maity},
  \citenamefont {Vedral}, \citenamefont {Li}, \citenamefont {Guo},\ and\
  \citenamefont {Girolami}}]{Zhang2017}%
  \BibitemOpen
  \bibfield  {author} {\bibinfo {author} {\bibfnamefont {C.}~\bibnamefont
  {Zhang}}, \bibinfo {author} {\bibfnamefont {B.}~\bibnamefont {Yadin}},
  \bibinfo {author} {\bibfnamefont {Z.-B.}\ \bibnamefont {Hou}}, \bibinfo
  {author} {\bibfnamefont {H.}~\bibnamefont {Cao}}, \bibinfo {author}
  {\bibfnamefont {B.-H.}\ \bibnamefont {Liu}}, \bibinfo {author} {\bibfnamefont
  {Y.-F.}\ \bibnamefont {Huang}}, \bibinfo {author} {\bibfnamefont
  {R.}~\bibnamefont {Maity}}, \bibinfo {author} {\bibfnamefont
  {V.}~\bibnamefont {Vedral}}, \bibinfo {author} {\bibfnamefont {C.-F.}\
  \bibnamefont {Li}}, \bibinfo {author} {\bibfnamefont {G.-C.}\ \bibnamefont
  {Guo}},\ and\ \bibinfo {author} {\bibfnamefont {D.}~\bibnamefont
  {Girolami}},\ }\href {https://doi.org/10.1103/PhysRevA.96.042327} {\bibfield
  {journal} {\bibinfo  {journal} {Phys. Rev. A}\ }\textbf {\bibinfo {volume}
  {96}},\ \bibinfo {pages} {042327} (\bibinfo {year} {2017})}\BibitemShut
  {NoStop}%
\bibitem [{\citenamefont {Takagi}(2019)}]{Takagi2018}%
  \BibitemOpen
  \bibfield  {author} {\bibinfo {author} {\bibfnamefont {R.}~\bibnamefont
  {Takagi}},\ }\href {https://doi.org/10.1038/s41598-019-50279-w} {\bibfield
  {journal} {\bibinfo  {journal} {Sci. Rep.}\ }\textbf {\bibinfo {volume}
  {9}},\ \bibinfo {pages} {14562} (\bibinfo {year} {2019})}\BibitemShut
  {NoStop}%
\bibitem [{\citenamefont {Wigner}\ and\ \citenamefont
  {Yanase}(1963)}]{Wigner1963}%
  \BibitemOpen
  \bibfield  {author} {\bibinfo {author} {\bibfnamefont {E.~P.}\ \bibnamefont
  {Wigner}}\ and\ \bibinfo {author} {\bibfnamefont {M.~M.}\ \bibnamefont
  {Yanase}},\ }\href {https://doi.org/10.1073/pnas.49.6.910} {\bibfield
  {journal} {\bibinfo  {journal} {Proc. Natl. Acad. Sci.}\ }\textbf {\bibinfo
  {volume} {49}},\ \bibinfo {pages} {910} (\bibinfo {year} {1963})}\BibitemShut
  {NoStop}%
\bibitem [{\citenamefont {Lieb}(1973)}]{Lieb1973}%
  \BibitemOpen
  \bibfield  {author} {\bibinfo {author} {\bibfnamefont {E.~H.}\ \bibnamefont
  {Lieb}},\ }\href {https://doi.org/10.1016/0001-8708(73)90011-X} {\bibfield
  {journal} {\bibinfo  {journal} {Adv. Math. (N. Y).}\ }\textbf {\bibinfo
  {volume} {11}},\ \bibinfo {pages} {267} (\bibinfo {year} {1973})}\BibitemShut
  {NoStop}%
\bibitem [{\citenamefont {Petz}\ and\ \citenamefont {Ghinea}(2011)}]{PETZ2011}%
  \BibitemOpen
  \bibfield  {author} {\bibinfo {author} {\bibfnamefont {D.}~\bibnamefont
  {Petz}}\ and\ \bibinfo {author} {\bibfnamefont {C.}~\bibnamefont {Ghinea}},\
  }\href {https://doi.org/10.1142/9789814338745_0015} {\emph {\bibinfo {title}
  {Quantum Probab. Relat. Top.}}}\ (\bibinfo  {publisher} {World Scientific},\
  \bibinfo {address} {Singapore},\ \bibinfo {year} {2011})\ pp.\ \bibinfo
  {pages} {261--281}\BibitemShut {NoStop}%
\bibitem [{\citenamefont {Marvian}\ and\ \citenamefont
  {Spekkens}(2014)}]{Marvian2014}%
  \BibitemOpen
  \bibfield  {author} {\bibinfo {author} {\bibfnamefont {I.}~\bibnamefont
  {Marvian}}\ and\ \bibinfo {author} {\bibfnamefont {R.~W.}\ \bibnamefont
  {Spekkens}},\ }\href {https://doi.org/10.1038/ncomms4821} {\bibfield
  {journal} {\bibinfo  {journal} {Nat. Commun.}\ }\textbf {\bibinfo {volume}
  {5}},\ \bibinfo {pages} {3821} (\bibinfo {year} {2014})}\BibitemShut
  {NoStop}%
\bibitem [{\citenamefont {Guryanova}\ \emph {et~al.}(2020)\citenamefont
  {Guryanova}, \citenamefont {Friis},\ and\ \citenamefont
  {Huber}}]{Guryanova2018}%
  \BibitemOpen
  \bibfield  {author} {\bibinfo {author} {\bibfnamefont {Y.}~\bibnamefont
  {Guryanova}}, \bibinfo {author} {\bibfnamefont {N.}~\bibnamefont {Friis}},\
  and\ \bibinfo {author} {\bibfnamefont {M.}~\bibnamefont {Huber}},\ }\href
  {https://doi.org/10.22331/q-2020-01-13-222} {\bibfield  {journal} {\bibinfo
  {journal} {Quantum}\ }\textbf {\bibinfo {volume} {4}},\ \bibinfo {pages}
  {222} (\bibinfo {year} {2020})}\BibitemShut {NoStop}%
\bibitem [{\citenamefont {Esposito}\ and\ \citenamefont {{Van den
  Broeck}}(2011)}]{Esposito2011}%
  \BibitemOpen
  \bibfield  {author} {\bibinfo {author} {\bibfnamefont {M.}~\bibnamefont
  {Esposito}}\ and\ \bibinfo {author} {\bibfnamefont {C.}~\bibnamefont {{Van
  den Broeck}}},\ }\href {https://doi.org/10.1209/0295-5075/95/40004}
  {\bibfield  {journal} {\bibinfo  {journal} {EPL (Europhysics Lett.}\ }\textbf
  {\bibinfo {volume} {95}},\ \bibinfo {pages} {40004} (\bibinfo {year}
  {2011})}\BibitemShut {NoStop}%
\bibitem [{\citenamefont {Szilard}(1929)}]{Szilard1976}%
  \BibitemOpen
  \bibfield  {author} {\bibinfo {author} {\bibfnamefont {L.}~\bibnamefont
  {Szilard}},\ }\href {https://doi.org/10.1007/BF01341281} {\bibfield
  {journal} {\bibinfo  {journal} {Zeitschrift f{\"{u}}r Phys.}\ }\textbf
  {\bibinfo {volume} {53}},\ \bibinfo {pages} {840} (\bibinfo {year}
  {1929})}\BibitemShut {NoStop}%
\bibitem [{\citenamefont {Maruyama}\ \emph {et~al.}(2009)\citenamefont
  {Maruyama}, \citenamefont {Nori},\ and\ \citenamefont
  {Vedral}}]{Maruyama2009}%
  \BibitemOpen
  \bibfield  {author} {\bibinfo {author} {\bibfnamefont {K.}~\bibnamefont
  {Maruyama}}, \bibinfo {author} {\bibfnamefont {F.}~\bibnamefont {Nori}},\
  and\ \bibinfo {author} {\bibfnamefont {V.}~\bibnamefont {Vedral}},\ }\href
  {https://doi.org/10.1103/RevModPhys.81.1} {\bibfield  {journal} {\bibinfo
  {journal} {Rev. Mod. Phys.}\ }\textbf {\bibinfo {volume} {81}},\ \bibinfo
  {pages} {1} (\bibinfo {year} {2009})}\BibitemShut {NoStop}%
\bibitem [{\citenamefont {Kim}\ \emph {et~al.}(2011)\citenamefont {Kim},
  \citenamefont {Sagawa}, \citenamefont {{De Liberato}},\ and\ \citenamefont
  {Ueda}}]{Kim2011a}%
  \BibitemOpen
  \bibfield  {author} {\bibinfo {author} {\bibfnamefont {S.~W.}\ \bibnamefont
  {Kim}}, \bibinfo {author} {\bibfnamefont {T.}~\bibnamefont {Sagawa}},
  \bibinfo {author} {\bibfnamefont {S.}~\bibnamefont {{De Liberato}}},\ and\
  \bibinfo {author} {\bibfnamefont {M.}~\bibnamefont {Ueda}},\ }\href
  {https://doi.org/10.1103/PhysRevLett.106.070401} {\bibfield  {journal}
  {\bibinfo  {journal} {Phys. Rev. Lett.}\ }\textbf {\bibinfo {volume} {106}},\
  \bibinfo {pages} {070401} (\bibinfo {year} {2011})}\BibitemShut {NoStop}%
\bibitem [{\citenamefont {Mohammady}\ and\ \citenamefont
  {Anders}(2017)}]{Mohammady2017}%
  \BibitemOpen
  \bibfield  {author} {\bibinfo {author} {\bibfnamefont {M.~H.}\ \bibnamefont
  {Mohammady}}\ and\ \bibinfo {author} {\bibfnamefont {J.}~\bibnamefont
  {Anders}},\ }\href {https://doi.org/10.1088/1367-2630/aa8ba1} {\bibfield
  {journal} {\bibinfo  {journal} {New J. Phys.}\ }\textbf {\bibinfo {volume}
  {19}},\ \bibinfo {pages} {113026} (\bibinfo {year} {2017})}\BibitemShut
  {NoStop}%
\bibitem [{\citenamefont {Aydin}\ \emph {et~al.}(2020)\citenamefont {Aydin},
  \citenamefont {Sisman},\ and\ \citenamefont {Kosloff}}]{Aydin2020}%
  \BibitemOpen
  \bibfield  {author} {\bibinfo {author} {\bibfnamefont {A.}~\bibnamefont
  {Aydin}}, \bibinfo {author} {\bibfnamefont {A.}~\bibnamefont {Sisman}},\ and\
  \bibinfo {author} {\bibfnamefont {R.}~\bibnamefont {Kosloff}},\ }\href
  {https://doi.org/10.3390/e22030294} {\bibfield  {journal} {\bibinfo
  {journal} {Entropy}\ }\textbf {\bibinfo {volume} {22}},\ \bibinfo {pages}
  {294} (\bibinfo {year} {2020})}\BibitemShut {NoStop}%
\bibitem [{\citenamefont {Earman}\ and\ \citenamefont
  {Norton}(1999)}]{Norton-Demon-2}%
  \BibitemOpen
  \bibfield  {author} {\bibinfo {author} {\bibfnamefont {J.}~\bibnamefont
  {Earman}}\ and\ \bibinfo {author} {\bibfnamefont {J.~D.}\ \bibnamefont
  {Norton}},\ }\href@noop {} {\bibfield  {journal} {\bibinfo  {journal} {Stud.
  Hist. Philos. Mod. Phys.}\ }\textbf {\bibinfo {volume} {30}},\ \bibinfo
  {pages} {1} (\bibinfo {year} {1999})}\BibitemShut {NoStop}%
\bibitem [{\citenamefont {Groenewold}(1971)}]{Groenewold1971}%
  \BibitemOpen
  \bibfield  {author} {\bibinfo {author} {\bibfnamefont {H.~J.}\ \bibnamefont
  {Groenewold}},\ }\href {https://doi.org/10.1007/BF00815357} {\bibfield
  {journal} {\bibinfo  {journal} {Int. J. Theor. Phys.}\ }\textbf {\bibinfo
  {volume} {4}},\ \bibinfo {pages} {327} (\bibinfo {year} {1971})}\BibitemShut
  {NoStop}%
\bibitem [{\citenamefont {Lindblad}(1972)}]{Lindblad1972}%
  \BibitemOpen
  \bibfield  {author} {\bibinfo {author} {\bibfnamefont {G.}~\bibnamefont
  {Lindblad}},\ }\href {https://doi.org/10.1007/BF01645778} {\bibfield
  {journal} {\bibinfo  {journal} {Commun. Math. Phys.}\ }\textbf {\bibinfo
  {volume} {28}},\ \bibinfo {pages} {245} (\bibinfo {year} {1972})}\BibitemShut
  {NoStop}%
\bibitem [{\citenamefont {Ozawa}(1986)}]{Ozawa1986}%
  \BibitemOpen
  \bibfield  {author} {\bibinfo {author} {\bibfnamefont {M.}~\bibnamefont
  {Ozawa}},\ }\href {https://doi.org/10.1063/1.527179} {\bibfield  {journal}
  {\bibinfo  {journal} {J. Math. Phys.}\ }\textbf {\bibinfo {volume} {27}},\
  \bibinfo {pages} {759} (\bibinfo {year} {1986})}\BibitemShut {NoStop}%
\bibitem [{\citenamefont {Danageozian}\ \emph {et~al.}(2022)\citenamefont
  {Danageozian}, \citenamefont {Wilde},\ and\ \citenamefont
  {Buscemi}}]{Danageozian2022}%
  \BibitemOpen
  \bibfield  {author} {\bibinfo {author} {\bibfnamefont {A.}~\bibnamefont
  {Danageozian}}, \bibinfo {author} {\bibfnamefont {M.~M.}\ \bibnamefont
  {Wilde}},\ and\ \bibinfo {author} {\bibfnamefont {F.}~\bibnamefont
  {Buscemi}},\ }\href {https://doi.org/10.1103/PRXQuantum.3.020318} {\bibfield
  {journal} {\bibinfo  {journal} {PRX Quantum}\ }\textbf {\bibinfo {volume}
  {3}},\ \bibinfo {pages} {020318} (\bibinfo {year} {2022})}\BibitemShut
  {NoStop}%
\bibitem [{\citenamefont {Heinosaari}\ \emph {et~al.}(2016)\citenamefont
  {Heinosaari}, \citenamefont {Miyadera},\ and\ \citenamefont
  {Ziman}}]{Heinosaari2015}%
  \BibitemOpen
  \bibfield  {author} {\bibinfo {author} {\bibfnamefont {T.}~\bibnamefont
  {Heinosaari}}, \bibinfo {author} {\bibfnamefont {T.}~\bibnamefont
  {Miyadera}},\ and\ \bibinfo {author} {\bibfnamefont {M.}~\bibnamefont
  {Ziman}},\ }\href {https://doi.org/10.1088/1751-8113/49/12/123001} {\bibfield
   {journal} {\bibinfo  {journal} {J. Phys. A Math. Theor.}\ }\textbf {\bibinfo
  {volume} {49}},\ \bibinfo {pages} {123001} (\bibinfo {year}
  {2016})}\BibitemShut {NoStop}%
\bibitem [{\citenamefont {G{\"{u}}hne}\ \emph {et~al.}(2021)\citenamefont
  {G{\"{u}}hne}, \citenamefont {Haapasalo}, \citenamefont {Kraft},
  \citenamefont {Pellonp{\"{a}}{\"{a}}},\ and\ \citenamefont
  {Uola}}]{Guhne2021}%
  \BibitemOpen
  \bibfield  {author} {\bibinfo {author} {\bibfnamefont {O.}~\bibnamefont
  {G{\"{u}}hne}}, \bibinfo {author} {\bibfnamefont {E.}~\bibnamefont
  {Haapasalo}}, \bibinfo {author} {\bibfnamefont {T.}~\bibnamefont {Kraft}},
  \bibinfo {author} {\bibfnamefont {J.-P.}\ \bibnamefont
  {Pellonp{\"{a}}{\"{a}}}},\ and\ \bibinfo {author} {\bibfnamefont
  {R.}~\bibnamefont {Uola}},\ }\href {http://arxiv.org/abs/2112.06784} {\
  (\bibinfo {year} {2021})},\ \Eprint {https://arxiv.org/abs/2112.06784}
  {arXiv:2112.06784} \BibitemShut {NoStop}%
\bibitem [{\citenamefont
  {Pellonp{\"{a}}{\"{a}}}(2014{\natexlab{a}})}]{Pellonpaa2014a}%
  \BibitemOpen
  \bibfield  {author} {\bibinfo {author} {\bibfnamefont {J.-P.}\ \bibnamefont
  {Pellonp{\"{a}}{\"{a}}}},\ }\href
  {https://doi.org/10.1088/1751-8113/47/5/052002} {\bibfield  {journal}
  {\bibinfo  {journal} {J. Phys. A Math. Theor.}\ }\textbf {\bibinfo {volume}
  {47}},\ \bibinfo {pages} {052002} (\bibinfo {year}
  {2014}{\natexlab{a}})}\BibitemShut {NoStop}%
\bibitem [{\citenamefont {Fine}(1982)}]{Fine1982}%
  \BibitemOpen
  \bibfield  {author} {\bibinfo {author} {\bibfnamefont {A.}~\bibnamefont
  {Fine}},\ }\href {https://doi.org/10.1103/PhysRevLett.48.291} {\bibfield
  {journal} {\bibinfo  {journal} {Phys. Rev. Lett.}\ }\textbf {\bibinfo
  {volume} {48}},\ \bibinfo {pages} {291} (\bibinfo {year} {1982})}\BibitemShut
  {NoStop}%
\bibitem [{\citenamefont {Cavalcanti}\ and\ \citenamefont
  {Skrzypczyk}(2017)}]{Cavalcanti2017}%
  \BibitemOpen
  \bibfield  {author} {\bibinfo {author} {\bibfnamefont {D.}~\bibnamefont
  {Cavalcanti}}\ and\ \bibinfo {author} {\bibfnamefont {P.}~\bibnamefont
  {Skrzypczyk}},\ }\href {https://doi.org/10.1088/1361-6633/80/2/024001}
  {\bibfield  {journal} {\bibinfo  {journal} {Reports Prog. Phys.}\ }\textbf
  {\bibinfo {volume} {80}},\ \bibinfo {pages} {024001} (\bibinfo {year}
  {2017})}\BibitemShut {NoStop}%
\bibitem [{\citenamefont {Carmeli}\ \emph {et~al.}(2019)\citenamefont
  {Carmeli}, \citenamefont {Heinosaari},\ and\ \citenamefont
  {Toigo}}]{Carmeli2019c}%
  \BibitemOpen
  \bibfield  {author} {\bibinfo {author} {\bibfnamefont {C.}~\bibnamefont
  {Carmeli}}, \bibinfo {author} {\bibfnamefont {T.}~\bibnamefont
  {Heinosaari}},\ and\ \bibinfo {author} {\bibfnamefont {A.}~\bibnamefont
  {Toigo}},\ }\href {https://doi.org/10.1103/PhysRevLett.122.130402} {\bibfield
   {journal} {\bibinfo  {journal} {Phys. Rev. Lett.}\ }\textbf {\bibinfo
  {volume} {122}},\ \bibinfo {pages} {130402} (\bibinfo {year}
  {2019})}\BibitemShut {NoStop}%
\bibitem [{\citenamefont {Manikandan}\ \emph {et~al.}(2022)\citenamefont
  {Manikandan}, \citenamefont {Elouard}, \citenamefont {Murch}, \citenamefont
  {Auff{\`{e}}ves},\ and\ \citenamefont {Jordan}}]{Manikandan2021}%
  \BibitemOpen
  \bibfield  {author} {\bibinfo {author} {\bibfnamefont {S.~K.}\ \bibnamefont
  {Manikandan}}, \bibinfo {author} {\bibfnamefont {C.}~\bibnamefont {Elouard}},
  \bibinfo {author} {\bibfnamefont {K.~W.}\ \bibnamefont {Murch}}, \bibinfo
  {author} {\bibfnamefont {A.}~\bibnamefont {Auff{\`{e}}ves}},\ and\ \bibinfo
  {author} {\bibfnamefont {A.~N.}\ \bibnamefont {Jordan}},\ }\href
  {https://doi.org/10.1103/PhysRevE.105.044137} {\bibfield  {journal} {\bibinfo
   {journal} {Phys. Rev. E}\ }\textbf {\bibinfo {volume} {105}},\ \bibinfo
  {pages} {044137} (\bibinfo {year} {2022})}\BibitemShut {NoStop}%
\bibitem [{\citenamefont {Martens}\ and\ \citenamefont
  {de~Muynck}(1990)}]{Holland1990}%
  \BibitemOpen
  \bibfield  {author} {\bibinfo {author} {\bibfnamefont {H.}~\bibnamefont
  {Martens}}\ and\ \bibinfo {author} {\bibfnamefont {W.~M.}\ \bibnamefont
  {de~Muynck}},\ }\href {https://doi.org/10.1007/BF00731693} {\bibfield
  {journal} {\bibinfo  {journal} {Found. Phys.}\ }\textbf {\bibinfo {volume}
  {20}},\ \bibinfo {pages} {255} (\bibinfo {year} {1990})}\BibitemShut
  {NoStop}%
\bibitem [{\citenamefont {Buscemi}\ \emph {et~al.}(2005)\citenamefont
  {Buscemi}, \citenamefont {Keyl}, \citenamefont {D'Ariano}, \citenamefont
  {Perinotti},\ and\ \citenamefont {Werner}}]{Buscemi2005}%
  \BibitemOpen
  \bibfield  {author} {\bibinfo {author} {\bibfnamefont {F.}~\bibnamefont
  {Buscemi}}, \bibinfo {author} {\bibfnamefont {M.}~\bibnamefont {Keyl}},
  \bibinfo {author} {\bibfnamefont {G.~M.}\ \bibnamefont {D'Ariano}}, \bibinfo
  {author} {\bibfnamefont {P.}~\bibnamefont {Perinotti}},\ and\ \bibinfo
  {author} {\bibfnamefont {R.~F.}\ \bibnamefont {Werner}},\ }\href
  {https://doi.org/10.1063/1.2008996} {\bibfield  {journal} {\bibinfo
  {journal} {J. Math. Phys.}\ }\textbf {\bibinfo {volume} {46}},\ \bibinfo
  {pages} {082109} (\bibinfo {year} {2005})}\BibitemShut {NoStop}%
\bibitem [{\citenamefont
  {Pellonp{\"{a}}{\"{a}}}(2014{\natexlab{b}})}]{Pellonpaa2014}%
  \BibitemOpen
  \bibfield  {author} {\bibinfo {author} {\bibfnamefont {J.-P.}\ \bibnamefont
  {Pellonp{\"{a}}{\"{a}}}},\ }\href {https://doi.org/10.1007/s10701-013-9764-y}
  {\bibfield  {journal} {\bibinfo  {journal} {Found. Phys.}\ }\textbf {\bibinfo
  {volume} {44}},\ \bibinfo {pages} {71} (\bibinfo {year}
  {2014}{\natexlab{b}})}\BibitemShut {NoStop}%
\bibitem [{\citenamefont {Heinosaari}\ and\ \citenamefont
  {Wolf}(2010)}]{Heinosaari2010}%
  \BibitemOpen
  \bibfield  {author} {\bibinfo {author} {\bibfnamefont {T.}~\bibnamefont
  {Heinosaari}}\ and\ \bibinfo {author} {\bibfnamefont {M.~M.}\ \bibnamefont
  {Wolf}},\ }\href {https://doi.org/10.1063/1.3480658} {\bibfield  {journal}
  {\bibinfo  {journal} {J. Math. Phys.}\ }\textbf {\bibinfo {volume} {51}},\
  \bibinfo {pages} {092201} (\bibinfo {year} {2010})}\BibitemShut {NoStop}%
\bibitem [{\citenamefont {Pellonp{\"{a}}{\"{a}}}(2013)}]{Pellonpaa2013a}%
  \BibitemOpen
  \bibfield  {author} {\bibinfo {author} {\bibfnamefont {J.-P.}\ \bibnamefont
  {Pellonp{\"{a}}{\"{a}}}},\ }\href
  {https://doi.org/10.1088/1751-8113/46/2/025303} {\bibfield  {journal}
  {\bibinfo  {journal} {J. Phys. A Math. Theor.}\ }\textbf {\bibinfo {volume}
  {46}},\ \bibinfo {pages} {025303} (\bibinfo {year} {2013})}\BibitemShut
  {NoStop}%
\bibitem [{\citenamefont {Hamamura}\ and\ \citenamefont
  {Miyadera}(2019)}]{Hamamura-Miyadera}%
  \BibitemOpen
  \bibfield  {author} {\bibinfo {author} {\bibfnamefont {I.}~\bibnamefont
  {Hamamura}}\ and\ \bibinfo {author} {\bibfnamefont {T.}~\bibnamefont
  {Miyadera}},\ }\href {https://doi.org/10.1063/1.5109446} {\bibfield
  {journal} {\bibinfo  {journal} {J. Math. Phys.}\ }\textbf {\bibinfo {volume}
  {60}},\ \bibinfo {pages} {082103} (\bibinfo {year} {2019})}\BibitemShut
  {NoStop}%
\bibitem [{\citenamefont {Lipka-Bartosik}\ and\ \citenamefont
  {Skrzypczyk}(2021)}]{Lipka-Bartosik2020a}%
  \BibitemOpen
  \bibfield  {author} {\bibinfo {author} {\bibfnamefont {P.}~\bibnamefont
  {Lipka-Bartosik}}\ and\ \bibinfo {author} {\bibfnamefont {P.}~\bibnamefont
  {Skrzypczyk}},\ }\href {https://doi.org/10.1103/PhysRevX.11.011061}
  {\bibfield  {journal} {\bibinfo  {journal} {Phys. Rev. X}\ }\textbf {\bibinfo
  {volume} {11}},\ \bibinfo {pages} {011061} (\bibinfo {year}
  {2021})}\BibitemShut {NoStop}%
\bibitem [{\citenamefont {Wilming}(2021)}]{Wilming2021}%
  \BibitemOpen
  \bibfield  {author} {\bibinfo {author} {\bibfnamefont {H.}~\bibnamefont
  {Wilming}},\ }\href {https://doi.org/10.1103/PhysRevLett.127.260402}
  {\bibfield  {journal} {\bibinfo  {journal} {Phys. Rev. Lett.}\ }\textbf
  {\bibinfo {volume} {127}},\ \bibinfo {pages} {260402} (\bibinfo {year}
  {2021})}\BibitemShut {NoStop}%
\bibitem [{\citenamefont {Lipka-Bartosik}\ \emph {et~al.}(2022)\citenamefont
  {Lipka-Bartosik}, \citenamefont {Perarnau-Llobet},\ and\ \citenamefont
  {Brunner}}]{Lipka-Bartosik2022}%
  \BibitemOpen
  \bibfield  {author} {\bibinfo {author} {\bibfnamefont {P.}~\bibnamefont
  {Lipka-Bartosik}}, \bibinfo {author} {\bibfnamefont {M.}~\bibnamefont
  {Perarnau-Llobet}},\ and\ \bibinfo {author} {\bibfnamefont {N.}~\bibnamefont
  {Brunner}},\ }\href {http://arxiv.org/abs/2205.00017} {\  (\bibinfo {year}
  {2022})},\ \Eprint {https://arxiv.org/abs/2205.00017} {arXiv:2205.00017}
  \BibitemShut {NoStop}%
\bibitem [{\citenamefont {Marvian}(2022)}]{Marvian2021}%
  \BibitemOpen
  \bibfield  {author} {\bibinfo {author} {\bibfnamefont {I.}~\bibnamefont
  {Marvian}},\ }\href {https://doi.org/10.1103/PhysRevLett.129.190502}
  {\bibfield  {journal} {\bibinfo  {journal} {Phys. Rev. Lett.}\ }\textbf
  {\bibinfo {volume} {129}},\ \bibinfo {pages} {190502} (\bibinfo {year}
  {2022})}\BibitemShut {NoStop}%
\bibitem [{\citenamefont {Buscemi}\ \emph {et~al.}(2016)\citenamefont
  {Buscemi}, \citenamefont {Das},\ and\ \citenamefont {Wilde}}]{Buscemi2016}%
  \BibitemOpen
  \bibfield  {author} {\bibinfo {author} {\bibfnamefont {F.}~\bibnamefont
  {Buscemi}}, \bibinfo {author} {\bibfnamefont {S.}~\bibnamefont {Das}},\ and\
  \bibinfo {author} {\bibfnamefont {M.~M.}\ \bibnamefont {Wilde}},\ }\href
  {https://doi.org/10.1103/PhysRevA.93.062314} {\bibfield  {journal} {\bibinfo
  {journal} {Phys. Rev. A}\ }\textbf {\bibinfo {volume} {93}},\ \bibinfo
  {pages} {1} (\bibinfo {year} {2016})}\BibitemShut {NoStop}%
\bibitem [{\citenamefont {Arias}\ \emph {et~al.}(2002)\citenamefont {Arias},
  \citenamefont {Gheondea},\ and\ \citenamefont {Gudder}}]{Arias2002}%
  \BibitemOpen
  \bibfield  {author} {\bibinfo {author} {\bibfnamefont {A.}~\bibnamefont
  {Arias}}, \bibinfo {author} {\bibfnamefont {A.}~\bibnamefont {Gheondea}},\
  and\ \bibinfo {author} {\bibfnamefont {S.}~\bibnamefont {Gudder}},\ }\href
  {https://doi.org/10.1063/1.1519669} {\bibfield  {journal} {\bibinfo
  {journal} {J. Math. Phys.}\ }\textbf {\bibinfo {volume} {43}},\ \bibinfo
  {pages} {5872} (\bibinfo {year} {2002})}\BibitemShut {NoStop}%
\bibitem [{\citenamefont {Keyl}\ and\ \citenamefont {Werner}(1999)}]{Keyl1999}%
  \BibitemOpen
  \bibfield  {author} {\bibinfo {author} {\bibfnamefont {M.}~\bibnamefont
  {Keyl}}\ and\ \bibinfo {author} {\bibfnamefont {R.~F.}\ \bibnamefont
  {Werner}},\ }\href {https://doi.org/10.1063/1.532887} {\bibfield  {journal}
  {\bibinfo  {journal} {J. Math. Phys.}\ }\textbf {\bibinfo {volume} {40}},\
  \bibinfo {pages} {3283} (\bibinfo {year} {1999})}\BibitemShut {NoStop}%
\bibitem [{\citenamefont {Khatri}\ and\ \citenamefont
  {Wilde}(2020)}]{Khatri2020}%
  \BibitemOpen
  \bibfield  {author} {\bibinfo {author} {\bibfnamefont {S.}~\bibnamefont
  {Khatri}}\ and\ \bibinfo {author} {\bibfnamefont {M.~M.}\ \bibnamefont
  {Wilde}},\ }\href {http://arxiv.org/abs/2011.04672} {\  (\bibinfo {year}
  {2020})},\ \Eprint {https://arxiv.org/abs/2011.04672} {arXiv:2011.04672}
  \BibitemShut {NoStop}%
\end{thebibliography}%

\end{document}